\documentclass[aps, prd, reprint, nofootinbib, longbibliography,floatfix, superscriptaddress,twocolumn,english]{revtex4-2}
\usepackage[utf8]{inputenc}
\usepackage{float}
\usepackage{soul}
\usepackage{amssymb,amsmath,amsfonts,amsthm,epsfig,epstopdf,array,mathrsfs}
\usepackage{braket}
\usepackage{graphicx}
\usepackage{dcolumn}

\usepackage[normalem]{ulem}
\usepackage{placeins}
\usepackage{afterpage}
\usepackage{bm}
\usepackage{hyperref}
\usepackage{comment}

\graphicspath{{Images/}}
\usepackage{tikz}
\usetikzlibrary{angles,quotes,calc,shapes.geometric,positioning,decorations.pathreplacing,decorations.pathmorphing}
\usepackage{tikz-3dplot}
\usepackage{nicematrix}
\usepackage{subcaption}
\usepackage{float} % for [H]
\usepackage{adjustbox} 
\theoremstyle{plain}
\usetikzlibrary{fadings}
\tikzfading[name=line fade, top color=transparent!90, bottom color=transparent!90, middle color=transparent!0]

\theoremstyle{definition}

\newtheorem{proposition}{Proposition}

\newsavebox\mybox

\newcommand{\blue}[1]{\textcolor{blue}{#1}}

\def\bea{\begin{eqnarray}}
\def\eea{\end{eqnarray}}
\def\be{\begin{equation}}
\def\ee{\end{equation}}

\begin{document}

\preprint{}

\title{Bipartite entanglement harvesting with multiple detectors}
 
\author{Santeri Salomaa}
\email{santeri.salomaa@helsinki.fi}
\affiliation{Department of Physics,
University of Helsinki, FI-00014 Helsinki, Finland}

\author{Esko Keski-Vakkuri}
\email{esko.keski-vakkuri@helsinki.fi}
\affiliation{Department of Physics,
University of Helsinki, FI-00014 Helsinki, Finland}

\author{Sergi Nadal-Gisbert}
\email{sergi.nadalgisbert@helsinki.fi}
\affiliation{Department of Physics,
University of Helsinki, FI-00014 Helsinki, Finland}

\begin{abstract}

We study bipartite entanglement harvesting from the quantum vacuum of a massless scalar field between two subsystems, each composed of a finite number of Unruh-DeWitt detectors. Using perturbation theory, we show that the leading-order negativity is fully determined by a submatrix of the reduced density matrix, with the submatrix dimension scaling only linearly with the number of detectors. Within this framework, we analyze how the detectors' spatial arrangement influences harvesting. For all three-detector configurations and several symmetric four-detector configurations, we derive analytic expressions for the negativity and identify the configurations that maximize it. For a linear chain, we find that the harvested entanglement scales linearly with the number of detectors. These results clarify how to arrange multiple detectors to optimize harvesting and show that increasing their number broadens the ranges of energy gaps and separations over which entanglement can be extracted from the field.

\end{abstract}

\maketitle

\section{Introduction}\label{sec:intro}

Entanglement in the vacuum state of a quantum field theory (QFT) plays a central role in many phenomena at the interface of quantum theory and relativity \cite{Witten:2018,reeh1961bemerkungen,Summers:1987fn,preskill92,Hawking2005,Braunstein2013,Martin-vennin2021,agullo22,Ribes-Metidieri:2024}. Beyond its conceptual significance, vacuum entanglement also constitutes a valuable resource for relativistic quantum information protocols \cite{Martin-Brown13,Yamaguchi-aida2020,hotta2009}. However, quantifying entanglement in QFT is highly non-trivial due to the mixed-state nature of local modes and the lack of a simple tensor-product factorization \cite{Hollands:2017dov, Agullo:2024har}. Moreover, due to the Type III von Neumann algebra structure of QFTs, all pure states are equally entangled and interconvertible by LOCC to arbitrary precision \cite{vanLuijk:2024cop}.

\textit{Entanglement Harvesting} provides an operational way to bypass these difficulties: two initially uncorrelated, localized probes interact with the field and become entangled by extracting pre-existing vacuum correlations, even across spacelike separation \cite{Valentini1991,reznik1,reznik2,Pozas-Kerstjens:2015,Pozas2016}. Harvesting protocols then give a way to indirectly probe entanglement properties of QFT states. Entanglement harvesting has been studied in a wide range of scenarios in both flat and curved spacetimes \cite{Salton:2014jaa,Ng2014,mutualInfoBH,freefall,HarvestingSuperposed,Henderson2019,bandlimitedHarv2020,ampEntBH2020,carol,boris,ericksonWhen,twist2022,SchwarzchildHarvestingWellDone}, and there is a growing effort to realize it experimentally \cite{Teixido-Bonfill:2025wqb,Lindel:2023rfi,Gooding:2023xxl,Perche:2025buo}.

Most studies of entanglement harvesting focus on setups involving only two detectors. However, the inherently multimode structure of entanglement in quantum field theory \cite{Agullo:2024har} suggests that employing multiple detectors localized in different regions of space could enhance the extraction of field quantum correlations. Our goal in this work, is to explore this possibility and investigate how the spatial arrangement of multiple detectors can affect and potentially amplify the amount of harvested entanglement.

To our knowledge, tripartite entanglement harvesting with Unruh-Dewitt detectors has been studied for three pointlike detectors in some particular configurations in Minkowski spacetime in \cite{threeHarvesting2022} and later extended to black hole spacetimes in \cite{tripartiteBHarvesting}. Bipartite entanglement harvesting with multiple qubits has been explored perturbatively in \cite{Kukita-Nambu-2017} in de Sitter spacetime, using the GKLS master equation to evolve the detector state and invoking large-separation approximations to obtain analytical expressions for the leading-order negativity. Their results show that super-horizon entanglement becomes detectable as the number of detectors grows.

In this work, we extend the study of bipartite entanglement harvesting to systems of multiple Unruh-DeWitt detectors within a perturbative framework. We first derive the detectors' reduced density matrix and show that the leading-order negativity is fully determined by a submatrix supported on the one-excitation subspace. This generalizes the result of \cite{Kukita-Nambu-2017} by imposing no approximations on detector separations. Since the size of this submatrix scales only linearly with the number of detectors, whereas the full density matrix scales exponentially, this approach is computationally efficient for large multi-detector systems. Furthermore, although negativity is not an exactly additive measure\footnote{Instead, logarithmic negativity is additive.}, we show that within a perturbative regime, its leading-order correction is additive for product states.

Using this framework, we then focus on three- and four-detector setups and optimize the detector  configurations to maximize the bipartite entanglement harvested between two causally disconnected subsystems in a Minkowski spacetime. Finally, we compute a model with a linear arrangement of detectors, and show that leading-order negativity scales linearly with the number of detectors.

Our results show that using more than two detectors enhances the harvested entanglement and enlarges the parameter regime where it can be extracted. We also find that harvesting is optimized by minimizing the distance between detectors belonging to different subsystems while maximizing the separation within each subsystem.

This manuscript is organized as follows. In Sec.~\ref{sec:set-up}, we present the framework with multiple detectors perturbatively interacting with the field.  In Sec.~\ref{sec:Perturbative evaluation of the reduced density matrix}, we compute the reduced density matrix of the detectors after the interaction. In Sec.~\ref{sec:negativity}, we introduce negativity as our entanglement quantifier and show that at leading order in perturbation theory it is additive. In Sec.~\ref{sec:method-leading-order-Neg}, we show that the submatrix required to extract the leading-order negativity scales linearly with the number of detectors. In Sec.~\ref{sec:Searching for the optimal spatial configurations}, we apply the method to different scenarios involving multiple detectors. In Sec.~\ref{sec:two-detect}, we first review the harvesting protocol for two pointlike detectors. In Sec.~\ref{sec:three-detect}, we consider three detectors and provide a systematic analysis covering all possible detector geometries compatible with the causal-separation
condition. In Sec.~\ref{sec:four-detect}, we present an extensive analysis of how
the spatial geometry of a four-detector setup influences entanglement harvesting. In Sec.~\ref{sec:Alternative linear chain}, we study a model where several detectors are arranged on a one-dimensional lattice with alternating subsystems $A$ and $B$. In the final subsection, Sec.~\ref{sec:Comparison with compactly supported switching function}, we discuss the main similarities and differences when using compactly supported switching functions instead of Gaussian functions.  Finally, we summarize our findings and discuss future research avenues in Sec.~\ref{sec: conclusions}.

Throughout this paper, we use natural units with $\hbar = c = 1$.

\section{Model of multiple Unruh-DeWitt detectors}\label{sec:set-up}

We consider a finite set of Unruh-DeWitt (UDW) detectors, interacting with a real massless quantum scalar field in a $3+1$ Minkowski spacetime. The interaction is given by the following Hamiltonian in the interaction picture
\begin{align}
    \hat{H}_I(t) =  \sum_{i=1}^{N} \lambda_i\,\chi_i(t)\,\hat{\mu}_i(t)\int_{\Sigma_t}d^3x\,f_i(\mathbf{x})\,\hat{\phi}(\mathbf{x},t), \label{eq:h_int}
\end{align}
where $N$ is the total number of detectors, $\lambda_i$ are the coupling constants, $\chi_i (t)$ are the switching functions which control the interaction time of each detector and $f_i(\mathbf{x})$ are the smearing functions, telling where the detector is located in space. The interaction with the quantum field is assumed to happen in a localized region of spacetime. The UDW detector monopole moments are denoted by $\hat{\mu}_i (t)$ and are given by
\begin{align}
    \hat{\mu}_i(t) = e^{i\Omega_i t}|1_i\rangle\langle0_i| + e^{-i\Omega_i t}|0_i\rangle\langle1_i|, \label{eq:mu}
\end{align}
where $\Omega_i$ is the energy gap of the $i^\mathrm{th}$ detector. The subscript index in the bra and ket vectors indicates the detector's Hilbert space to which it operates. The massless quantum scalar field can be expanded in plane-wave modes in the following way,
\begin{align}
    \hat \phi(\mathbf{x},t)= \int \frac{d^3k}{\sqrt{(2\pi)^{3} 2|\mathbf{k}|}}\left( \hat a_{\mathbf{k}} \varphi_k(t) e^{i \mathbf{k}\cdot\mathbf{x}}  + h.c. \right), \label{fieldmodes}
\end{align}
where $\varphi_{k}(t)=e^{-i |\mathbf{k}| t}$ and the creation and annihilation operators $\hat a^{\dagger}_{\mathbf{k}}$ and $\hat a_{\mathbf{k}}$ satisfy the usual canonical commutation relations $[\hat a_{\mathbf{k}}, \hat a^{\dagger}_{\mathbf{k'}}]= \delta^{3}(\mathbf{k}-\mathbf{k'})$.

We consider a bipartition of the set of $N$ detectors into two disjoint, non-empty subsets, $A$ and $B$, consisting of $N_A$ and $N_B$ detectors, respectively, such that $N=N_A +N_B$. The total Hilbert space then factorizes as
\begin{align*}
    \mathcal{H} &= \mathcal{H}_A \otimes \mathcal{H}_B \otimes \mathcal{H}_\phi \\
    &= \Bigg( \bigotimes_{i=1}^{N_A} \mathcal{H}_i \Bigg) \otimes \Bigg( \bigotimes_{j=N_A+1}^{N_A+N_B} \mathcal{H}_j \Bigg) \otimes \mathcal{H}_\phi \\
    &= \left( \bigotimes_{k=1}^{N_A+N_B} \mathbb{C}^2 \right) \otimes \mathcal{H}_\phi.
\end{align*}
Keeping this ordering of the Hilbert spaces in mind, the detector index $k$ uniquely determines whether it belongs to subsystem  $A$ or $B$.

\subsection{Perturbative evaluation of the reduced density matrix}\label{sec:Perturbative evaluation of the reduced density matrix}

We perform a leading-order perturbative analysis where we assume all detector couplings $\lambda_i$ to be identical and lying in the weak-coupling regime, i.e., $\lambda\equiv\lambda_i  \ll 1$. We will choose the initial separable state to be the tensor product of the vacuum state of the field and the ground state of the detectors,
\begin{align}
     \hat \rho^0 &= \hat \rho_D^0 \otimes \hat \rho_\phi^0 \label{init_rho} ,
\end{align}
where $\hat \rho_\phi^0 = | 0 \rangle\langle 0 |$ and $\hat \rho_D^0$ is the initial state of the detectors
\begin{align}
    \hat \rho_D^0 &= \bigotimes_{i=1}^{N_A+N_B}|0_i\rangle\langle 0_i| \label{init_rho_detectors} .
\end{align}

In the interaction picture, time evolution is implemented by evolving the state with the unitary 
\begin{equation}\label{eq:UI}
    \hat{U}_I = \mathcal{T}\exp\left(- i \int dt \, \hat{H}_I(t)\right),
\end{equation}
where $\mathcal{T}$ denotes the time-ordering operator, such that the final state is
\begin{align}
    \hat \rho^f &= \hat U_I\hat \rho^0 \hat U^\dag_I \label{final_rho} .
\end{align}
We expand this unitary perturbatively using the Dyson series,
\begin{align*}
    \hat U_I&= \openone+\hat U_I^{(1)}+\hat U_I^{(2)}+\mathcal{O}(\lambda^3),
\end{align*}
where the first terms are
\begin{align*}
        \hat U_I^{(1)}&=-i\int dt\,\hat  H_I(t), \\
        \hat U_I^{(2)}&=-\int dt dt'\,\mathcal{T}\{\hat H_I(t)\hat H_I(t')\} \\
        &=-\int dt dt'\,\Theta(t-t') \hat H_I(t) \hat H_I(t').
\end{align*}
Substituting these expressions into $\hat \rho^f$ in Eq.~\eqref{final_rho}, yields a perturbative expansion of the final state
\begin{align}
    \hat \rho^f &=  \bigg( \sum_i \hat U_I^{(i)}\bigg)\hat \rho^0 \bigg( \sum_j \hat U_I^{(j)\dagger}\bigg) = \sum_n \lambda^n\hat\rho^{(n)}.
\end{align}
Thus,
\begin{align*}
    \hat \rho^f &= \hat \rho^{(0)} + \lambda \hat \rho^{(1)} + \lambda^2\hat \rho^{(2)}  +\mathcal{O}(\lambda^3),
\end{align*}
where
\begin{align*}
    \hat \rho^{(0)} &= \hat \rho^0 \\
    \lambda\hat \rho^{(1)} &= \hat U_I^{(1)}\hat \rho^0+\hat \rho^0\hat U_I^{(1)\dag} \\
    \lambda^2 \hat \rho^{(2)} &= \hat U_I^{(1)}\hat \rho^0\hat U_I^{(1)\dag}+\hat U_I^{(2)}\hat \rho^0+\hat \rho^0\hat U_I^{(2)\dag}.
\end{align*}

The detectors’ final state after the interaction is obtained by tracing over the field’s degrees of freedom. This gives the reduced density matrix
\begin{align}
    \hat \rho_{AB} := \mathrm{Tr}_\phi(\hat \rho^f). \label{reduced_density_matrix_def}
\end{align}
For the vacuum state of the scalar field, all odd  $n$-point functions vanish. Hence, $\hat\rho^{(1)}_{AB}=0$, and the leading nonzero correction to $\hat \rho_{AB}$ arises at $\mathcal{O}(\lambda^2)$.

To express the perturbed state in a form suitable for later bipartite entanglement analysis, we introduce some basic definitions. First, we define index sets for the detectors corresponding to the subsystem split
\begin{align}
    A &:= \{1,2,...,N_A\} \label{index_set_A}, \\
    B &:= \{N_A+1,N_A+2,...,N_A+N_B\} \label{index_set_B}.
\end{align}
Second, following 
\cite{Kukita-Nambu-2017}, we present detector basis vectors by separating the subsystems $A$ and $B$ with a colon:
\begin{itemize}
    \item $\ket{0:0}=\bigotimes_{i=1}^{N_A+N_B} \ket{0}$: the ground state.
    \item $\ket{i:0}$: only the $i^\mathrm{th}$ ($i\in A$) detector is excited.
    \item $\ket{0:j}$: only the $j^\mathrm{th}$ ($j\in B$) detector is excited.
    \item $\ket{i:j}$: both $i^\mathrm{th}$ ($i\in A$) and $j^\mathrm{th}$ ($j\in B$) detectors are excited.
    \item $\ket{ij:0}$: both $i^\mathrm{th}$ and $j^\mathrm{th}$ detectors ($i,j\in A$) detectors are excited.
    \item $\ket{0:ij}$: both $i^\mathrm{th}$ and $j^\mathrm{th}$ detectors ($i,j\in B$) detectors are excited.
\end{itemize}
At leading order $\mathcal{O}(\lambda^2)$, these basis vectors provide the only nonzero components of the reduced density matrix $\hat \rho_{AB}$, since the detectors are initially in their ground states.

Performing the perturbative expansion, $\hat \rho_{AB}$ decomposes into two orthogonal subspaces and takes a block-diagonal form,
\begin{align}
    \hat \rho_{AB} &= \rho_1\oplus\rho_2+\mathcal{O}(\lambda^4). \label{reduced_density_matrix_direct_sum}
\end{align}
The first block, $\rho_1$, contains the one-excitation terms and is supported on the subspace spanned by the basis
\begin{align}
    \big\{ \ket{i:0},\,\ket{0:j} \,\big| \, i\in A,\,j\in B\big\}. \label{basis1}
\end{align}
Explicitly,
\begin{align}
    \rho_1 &= \sum_{i\in A}P_{i} \ket{i:0}\bra{i:0} + \sum_{i\in B}P_{i} \ket{0:i}\bra{0:i} \nonumber\\
    &\quad+ \sum_{i>j\in A}\Big( C_{ij}\ket{j:0}\bra{i:0} + C_{ij}^*\ket{i:0}\bra{j:0} \Big) \nonumber\\
    &\quad+ \sum_{i>j\in B}\Big( C_{ij}\ket{0:j}\bra{0:i}  + C_{ij}^*\ket{0:i}\bra{0:j}\Big) \nonumber\\
    &\quad+ \sum_{\substack{i\in B\\j\in A}}\Big( C_{ij}\ket{j:0}\bra{0:i}  + C_{ij}^*\ket{0:i}\bra{j:0}\Big), \label{rho1}
\end{align}
The second block, $\rho_2$, contains the vacuum and two-excitation terms and is supported on the complementary subspace,
\begin{align}
    \rho_2 &= \Big(1 - \sum_{i=1}^{N}P_{i}\Big)\ket{0:0}\bra{0:0} \nonumber\\
    &\quad+ \sum_{i>j\in A}\Big(X_{ij}\ket{ij:0}\bra{0:0}+ X_{ij}^*\ket{0:0}\bra{ij:0}\Big) \nonumber\\
    &\quad+ \sum_{i>j\in B}\Big(X_{ij}\ket{0:ij}\bra{0:0}+ X_{ij}^*\ket{0:0}\bra{0:ij}\Big) \nonumber\\
    &\quad+ \sum_{\substack{i\in B\\j\in A}}\Big(X_{ij}\ket{j:i}\bra{0:0}+ X_{ij}^*\ket{0:0}\bra{j:i} \Big)\label{rho2}.
\end{align}
The matrix elements in Eqs.~\eqref{rho1} and \eqref{rho2} are
\begin{align}
    P_{i} &:= \lambda^2\int d^4x d^4x' \Lambda_i(x)\Lambda_i(x') e^{-i\Omega_i( t- t')} W(x,x') , \label{P_term}\\
    C_{ij} &:= \lambda^2\int d^4x d^4x' \Lambda_i(x)\Lambda_j(x') e^{-i(\Omega_i t-\Omega_j t')} W(x,x') , \label{C_term}\\
    X_{ij} &:= -\lambda^2\int d^4x d^4x' \Lambda_i(x) \Lambda_j(x') e^{i(\Omega_i t+\Omega_j t')} \nonumber\\
    &\quad \times \Big(\Theta(t-t')W(x,x')+ \Theta(t'-t)W(x',x)\Big).\label{X_term} 
\end{align}
Here $\Lambda_i(x):=\chi_i(t) f_i(\mathbf{x})$ denotes the spacetime smearing function, and $W(x,x')$ is the Wightman two-point function,
\begin{align}
    W(x,x'):=\mathrm{Tr}\left( \hat \rho_\phi^0\, \hat \phi(x)\hat \phi(x')\right). \label{wightman_func_def}
\end{align}

To display a matrix representation of the reduced density matrix, we specify the ordering of the basis vectors. We adopt a block-diagonal ordering of Eq.~\eqref{reduced_density_matrix_direct_sum} which is convenient when computing the eigenvalues of the partially transposed matrix in Sec.~\ref{sec:method-leading-order-Neg}. The ordering begins with the one-excitation basis of Eq.~\eqref{basis1}, followed by $\ket{0:0}$, and then the remaining two-excitation basis, with binary ordering preserved within each block. With this ordering, $\hat \rho_{AB}$ takes the form
\begin{align}
    \hat \rho_{AB} &=
     \begin{pmatrix}
        \rho_1 & \mathbf{0} \\
        \mathbf{0} & \rho_2
    \end{pmatrix}+\mathcal{O}(\lambda^4),
\end{align}
where $\rho_1$ is the block matrix
\begin{align}
    \rho_1 &=
    \begin{pmatrix}
        \mathcal{C}_{BB} & \mathcal{C}_{BA}^\dag \\
        \mathcal{C}_{BA} & \mathcal{C}_{AA}
    \end{pmatrix}, \label{rho1_block}
\end{align}
with components explicitly given by
\begin{widetext}
\begin{align}
    \rho_1 &=
    {\footnotesize
    \begin{pNiceArray}{cccc|cccc}
        P_{N_A+N_B} & C^*_{N_A+N_B,N_A+N_B-1} & \cdots & C^*_{N_A+N_B,N_A+1}  & C^*_{N_A+N_B,N_A} & C^*_{N_A+N_B,N_A-1} & \cdots & C^*_{N_A+N_B,1} \\ 
        C_{N_A+N_B,N_A+N_B-1} & P_{N_A+N_B-1} & \cdots & C^*_{N_A+N_B,N_A+1} & C^*_{N_A+N_B-1,N_A} & C^*_{N_A+N_B-1,N_A-1} & \cdots & C^*_{N_A+N_B-1,1} \\
        \vdots & \vdots & \ddots & \vdots & \vdots & \vdots & \ddots & \cdots \\
        C_{N_A+N_B,N_A+1} & C_{N_A+N_B-1,N_A+1} & \cdots & P_{N_A+1} & C_{N_A+1,N_A}^* & C^*_{N_A+1,N_A-1} & \cdots & C^*_{N_A+1,1} \\
        \hline
        C_{N_A+N_B,N_A} & C_{N_A+N_B-1,N_A} & \cdots & C_{N_A+1,N_A} & P_{N_A} & C^*_{N_A,N_A-1} & \cdots & C^*_{N_A,1} \\ 
        C_{N_A+N_B,N_A-1} & C_{N_A+N_B-1,N_A-1} & \cdots & C_{N_A+1,N_A-1} & C_{N_A,N_A-1} & P_{N_A-1} & \cdots & C^*_{N_A-1,1} \\
        \vdots & \vdots & \ddots & \vdots & \vdots & \vdots & \ddots & \cdots \\
        C_{N_A+N_B,1} & C_{N_A+N_B-1,1} & \cdots & C_{N_A+1,1} & C_{N_A,1} & C_{N_A-1,1} & \cdots & P_{1}
    \end{pNiceArray} \label{rho1_block_opened}
    }
    ,
\end{align}
\end{widetext}
while $\rho_2$ takes the form
\begin{align}
    \rho_2 &=
    \begin{pmatrix}
        \rho_{00} & \mathbf{x}^\dag & \mathbf{0} \\
        \mathbf{x} & \mathbf{0} & \mathbf{0} \\
        \mathbf{0}& \mathbf{0}& \mathbf{0}
    \end{pmatrix}. \label{reduced_density_matrix_repr}
\end{align}
Its non-zero elements consist of the the vacuum component $\rho_{00} = 1 - \sum_i P_i$ and the vector
\begin{align}
    \mathbf{x}&=
    \begin{pmatrix}
        \mathcal{X}_{BB} \\
        \mathcal{X}_{BA} \\
        \mathcal{X}_{AA}
    \end{pmatrix}
    =
    \begin{pmatrix}
        X_{N_A+N_B,N_A+N_B-1} \\
        X_{N_A+N_B,N_A+N_B-2} \\
        \vdots  \\
        X_{21}
    \end{pmatrix} . \label{m_vector}
\end{align}

\subsection{Negativity and its leading-order additivity}\label{sec:negativity}

Characterizing bipartite entanglement between systems $A$ and $B$ after the detectors interact with the field is a difficult task. In fact, checking whether a given mixed state is separable is an NP-hard problem \cite{Gurvits:2003gdo, Gharibian:2008hgo}. Selecting an entanglement measure involves making informed compromises, as no known single measure captures all desirable features, such as computability, operational significance, monotonicity under LOCC, additivity, convexity, and asymptotic continuity.

Fortunately, some entanglement measures are both informative and easy to compute. One such a measure is the \emph{negativity}  \cite{Vidal-Werner-2002}, defined as
\begin{align}
    \mathcal{N}(\hat \rho) := \frac{\|\hat \rho^{T_B}\|_1-1}{2}
    = \sum_{\alpha_i<0} |\alpha_i|, \label{neg_definition}
\end{align}
where $\hat \rho^{T_B}$ denotes the partial transpose of the density matrix  $\hat \rho$ with respect to subsystem $B$ and $\|\cdot\|_1$ is the trace norm. This turns out to be equivalent to the sum of the absolute values of the negative eigenvalues $\alpha_i<0$ of $\hat \rho^{T_B}$. %Thus, being an entanglement measure easy to compute.

Negativity is based on the  Peres-Horodecki Positive Partial Transpose (PPT) criterion \cite{Peres:1996dw, Horodecki:1996nc}, which states that the partial transposition of a separable state $\hat\rho$ yields a positive semidefinite operator $\hat\rho^{T_B}\geq 0 $. Thus, negativity provides an efficient method for detecting entanglement, as the spectrum of the partially transposed density matrix is easily computed---unlike that of most other entanglement measures for mixed states. Moreover, it is convex function and, most importantly, an entanglement monotone under LOCC operations \cite{Vidal-Werner-2002}.

In low-dimensional bipartite systems with 
\begin{align}
    \dim(\mathcal{H}_A)\dim(\mathcal{H}_B)\leq 6, \label{dimension_boundary_on_ppt_entanglement}
\end{align}
such as in typical two Unruh-DeWitt detector entanglement harvesting models, negativity is a necessary and sufficient condition for entanglement, i.e., all PPT states are separable \cite{Horodecki:1996nc}. However, in higher-dimensional systems, there might exist entangled states with a positive partial transpose, so-called \textit{PPT entangled states}. These special states are not \textit{distillable}---meaning no maximally entangled Bell pairs can be extracted from them via LOCC \cite{Audenaert2003, entanglementmeasuresreview}---and are said to have \textit{bound} entanglement \cite{Horodecki:1998kf}. In this work we will focus on quantifying only NPT entangled states because there exist entanglement monotones for them that are both efficiently computable and reliable, while practical measures for bound (PPT) entanglement are still lacking \cite{Hiesmayr:2024cwb}. 

Since we aim to quantify entanglement with multiple detectors, the \emph{additivity} of the entanglement measure is a desirable property. Additivity means that, for a composite system of independent subsystems each containing bipartite entanglement, the total entanglement is given by the sum of the entanglement in the subsystems. The \textit{logarithmic negativity} is exactly additive, in a non-perturbative sense, whereas the standard negativity does not have this property exactly \cite{Vidal-Werner-2002}.

However, as show in Appendix \ref{App:proof_of_additivity}, the negativity is additive at leading order in perturbation theory. Consider independent bipartite subsystems
$$
\hat\rho_i = \hat\rho^{(0)}_i + \lambda^{n_i} \hat\rho^{(n_i)}_i + \mathcal{O}(\lambda^{n_i+1})
$$
with unperturbed states satisfying the PPT condition $\big(\hat\rho_i^{(0)}\big)^{T_B} \ge 0$. Then, for $n=\min_i n_i$ and sufficiently small $\lambda$, the leading-order negativity of the product state is additive:
$$
\mathcal{N}\left( \bigotimes_ {i=1}^N\hat\rho_{i} \right) = \lambda^n \sum_ {i=1}^N \mathcal{N}^{(n)}(\hat\rho_i)+\mathcal{O}(\lambda^{n+1}).
$$
The additivity follows because, under perturbation, only eigenvalues in the kernel of each subsystem’s partially transposed state can become negative, and their contributions appear at the leading order $\mathcal O(\lambda^{n_i})$.

All these facts make negativity a practical tool for quantifying bipartite entanglement in perturbative analysis. Therefore, we will use it throughout the rest of the paper.

\subsection{Extracting leading-order negativity}\label{sec:method-leading-order-Neg}

Starting from the reduced density matrix of Eq.~\eqref{reduced_density_matrix_repr}, let us derive now the leading-order negativity of the detectors after the interaction. Although the full density matrix grows exponentially with the number of Unruh-DeWitt detectors, we will show that computing the negativity only requires a block matrix whose size grows linearly. These results are consistent with the master-equation approach of \cite{Kukita-Nambu-2017}, which first established this behavior. \blue{} Here we extend that work by employing the full perturbative time evolution of the density matrix and computing the eigenvalues without additional approximations, using a matrix-based formulation.

Applying the partial transpose to the reduced density matrix $\hat\rho_{AB}$ in Eq.~\eqref{reduced_density_matrix_direct_sum} preserves its block-diagonal structure
\begin{align}
\hat \rho_{AB}^{T_B} = \tilde{\rho}_1 \oplus \tilde{\rho}_2 + \mathcal{O}(\lambda^4), \label{density_matrix_sum_pt}
\end{align}
where $\tilde{\rho}_1$ is supported on the one-excitation sector spanned by the basis vectors in Eq.~\eqref{basis1}.
\begin{align}
    \tilde{\rho}_1 &= \sum_{i\in A}P_{i} \ket{i:0}\bra{i:0} + \sum_{i\in B}P_{i} \ket{0:i}\bra{0:i} \nonumber\\
    &\quad+ \sum_{i>j\in A}\Big( C_{ij}\ket{j:0}\bra{i:0} + C_{ij}^*\ket{i:0}\bra{j:0} \Big) \nonumber\\
    &\quad+ \sum_{i>j\in B}\Big( C_{ij}\ket{0:j}\bra{0:i}  + C^*_{ij}\ket{0:i}\bra{0:j}\Big)^* \nonumber\\ 
    &\quad+ \sum_{\substack{i\in B\\j\in A}}\Big(X_{ij}\ket{j:0}\bra{0:i}+ X_{ij}^*\ket{0:i}\bra{j:0} \Big), \label{rho1_pt_dirac}
\end{align}
while $\tilde{\rho}_2$ is spanned by the remaining basis vectors:
\begin{align}
    \tilde{\rho}_2 &= \Big(1 - \sum_{i=1}^{N}P_{i}\Big)\ket{0:0}\bra{0:0} \nonumber\\
    &\quad+ \sum_{i>j\in A}\Big(X_{ij}\ket{ij:0}\bra{0:0}+ X_{ij}^*\ket{0:0}\bra{ij:0}\Big) \nonumber\\
    &\quad+ \sum_{i>j\in B}\Big(X_{ij}\ket{0:ij}\bra{0:0}+ X_{ij}^*\ket{0:0}\bra{0:ij}\Big)^* \nonumber\\
    &\quad+ \sum_{\substack{i\in B\\j\in A}}\Big(C_{ij}\ket{j:i}\bra{0:0}+ C_{ij}^*\ket{0:0}\bra{j:i} \Big)\label{rho2_pt_dirac}.
\end{align}
Comparing the $\tilde{\rho}_1$ and $\tilde{\rho}_2$ blocks to the original reduced density matrix blocks $\rho_1$ in Eq.~\eqref{rho1} and $\rho_2$ in Eq.~\eqref{rho2}, we notice that the partial transpose interchanged $C_{ij}$ and $X_{ij}$ terms with indices belonging to different subsystems and conjugated the terms having only indices of subsystem $B$.

In the block-diagonal ordered matrix representation, the partial transposition acts on $\hat \rho_{AB}$ as
\begin{align}
    \hat \rho_{AB} =
    \begin{pmatrix}
        \rho_1 & \mathbf{0} \\
        \mathbf{0} & \rho_2
    \end{pmatrix}
    \,&\longmapsto\,
    \hat \rho_{AB}^{T_B} =\begin{pmatrix}
        \tilde{\rho}_1 & \mathbf{0} \\
        \mathbf{0} & \tilde{\rho}_2
    \end{pmatrix}.\label{reduced_density_matrix_pt_block_diagonal}
\end{align}
The transformed blocks are
\begin{align}
    \tilde{\rho}_1 =
    \begin{pmatrix}
        \mathcal{C}_{BB}^* & \mathcal{X}_{BA}^{\dagger} \\
        \mathcal{X}_{BA} & \mathcal{C}_{AA}
    \end{pmatrix},\label{rho1_pt}
\end{align}
and
\begin{align}
    \tilde{\rho}_2 = \begin{pmatrix}
         \rho_{00} & \mathbf{y}^{\dagger} & \mathbf{0} \\
         \mathbf{y} & \mathbf{0} & \mathbf{0} \\
        \mathbf{0}& \mathbf{0}& \mathbf{0}
    \end{pmatrix} \label{rho2_pt}, \quad \text{where} \quad \mathbf{y}=
    \begin{pmatrix}
        \mathcal{X}_{BB}^* \\
        \mathcal{C}_{BA} \\
        \mathcal{X}_{AA}
    \end{pmatrix},
\end{align}
and $\rho_{00}=1-\sum_i P_{i}$.

We now demonstrate that the leading-order contribution to the negativity is entirely contained in the $\tilde{\rho}_1$ block in Eq.~\eqref{rho1_pt}. This is significant since, while $\hat\rho_{AB}$ grows exponentially with the number of detectors, $\tilde{\rho}_1$ grows only linearly. Explicitly, the dimension of the considered matrix reduces as:
\begin{align*}
    2^{N}\times 2^{N} \,\longrightarrow\, N\times N.
\end{align*}

To see this let us evaluate the eigenvalues of $\hat\rho_{AB}^{T_B}$. Using the block-diagonal partially transposed reduced density matrix in Eq.~\eqref{reduced_density_matrix_pt_block_diagonal}, we can split the eigenvalues $\{\alpha_i\}$ by using the determinant diagonal block rule to the characteristic equation
\begin{align*}
    \det \big(\hat \rho_{AB}^{T_B}-\alpha \openone\big)
    =  \det \begin{pmatrix}
        \tilde{\rho}_1-\alpha \openone & \mathbf{0} \\
        \mathbf{0} & \tilde{\rho}_2-\alpha\openone
    \end{pmatrix} &=0 \\
    \iff \det \big(\tilde{\rho}_1-\alpha\openone\big) \det \big(\tilde{\rho}_2-\alpha\openone\big)&=0 . 
\end{align*}
Next, using the Schur complement rule\footnote{When $D$ is invertible, $\det \begin{pmatrix}
    A & B \\
    C & D
\end{pmatrix} = \det (D) \det (A-BD^{-1}C)$} to $\det \big(\tilde{\rho}_2-\alpha\openone\big)$, we see that $\tilde\rho_2$ block in Eq.~\eqref{rho2_pt} gives always non-negative eigenvalues at the leading-order $\mathcal{O}(\lambda^2)$:
\begin{align*}
    \alpha &= 0 \quad \text{with multiplicity $2^{N}-N-2$} , \\
    \alpha &= \frac{1}{2}\left(\rho_{00}+ \sqrt{\rho_{00}^2+4\mathbf{y}^\dag\mathbf{y}}\right) \geq 0, \\
    \alpha &= \frac{1}{2}\left(\rho_{00} - \sqrt{\rho_{00}^2+4\mathbf{y}^\dag\mathbf{y}}\right) = 0 + \mathcal{O} (\lambda^4).
\end{align*}
Therefore, the $\tilde{\rho}_2$ does not contribute to the leading-order negativity.

As a result, the calculation of negativity reduces to evaluating the negative eigenvalues of the block $\tilde{\rho}_1$. Explicitly, the negativity is given by
\begin{align}
    \mathcal{N}(\hat\rho_{AB}) = \sum_{i=1}^{N} \max (0,-\alpha_i) + \mathcal{O}(\lambda^4), \label{negativity_reduction_result}
\end{align}
where $\{\alpha_i\}$ denotes the spectrum of the $\tilde{\rho}_1$ block, given in Eq.~\eqref{rho1_pt}, for an arbitrary numbers of detectors in both subsystems. Thus, although the full density matrix of the detectors grows exponentially with the number of detectors, this result shows that only a linearly growing matrix $\tilde{\rho}_1$ is needed to extract the leading-order negativity, making the entanglement analysis computationally tractable.

\section{Searching for the optimal spatial configurations}\label{sec:Searching for the optimal spatial configurations}

In the remainder of the paper, we apply the method introduced above and compute the leading-order negativity for various scenarios involving multiple detectors.

Our main objective is to identify the \emph{optimal spatial arrangements} of the detectors that maximize the genuine entanglement harvested. We focus on three and four detector cases. Finally, we demonstrate a linear chain of detectors and show that the harvested leading-order negativity increases linearly with the number of detectors.

In the next subsection we fix the smearing and switching functions used in the main analysis. Afterward, in Sec.~\ref{sec:Comparison with compactly supported switching function}, we compare those results with ones obtained using strictly compactly supported switching functions.

\subsection{Setup: switching, smearing, and causal disconnection}\label{sec:setup_and_switching}

We study how the detectors' spatial positions affect entanglement harvesting in a multi-detector setup. To isolate harvesting from signalling effects, we impose causal constraints: distinct subsystems are required to remain spacelike separated, while detectors within the same subsystem can be in causal contact \cite{martin-martinez2015,mariaPipoNew,Tjoa-Martin2021}.

To make the causal structure explicit, we decompose the Wightman function as
\begin{align}
    W(x,x') = \underbrace{ \tfrac12\langle\{\hat\phi(x),\hat\phi(x')\}\rangle}_\text{symmetric (+)} 
    + \underbrace{\tfrac12\langle[\hat\phi(x),\hat\phi(x')]\rangle}_\text{antisymmetric (-)} . \label{commutator_anticommutator}
\end{align}
This induces a corresponding decomposition of the nonlocal reduced density matrix elements (see Eqs.~\eqref{C_term}, \eqref{X_term}),
\begin{align*}
    C_{ij}=C_{ij}^++C_{ij}^-,\quad X_{ij}=X_{ij}^++X_{ij}^-.
\end{align*}
In the $\tilde\rho_1$ block of Eq.~\eqref{rho1_pt}, which determines the leading-order negativity, the terms $X_{ij}$ contain all cross-subsystem contributions. As introduced in \cite{Tjoa-Martin2021,Zambianco:2024dqn}, we can associate the symmetric part $X_{ij}^{+}$ to genuine entanglement harvesting, while the state-independent antisymmetric part $X_{ij}^{-}$ describes field-mediated communication. When dealing with more than one detector per subsystem, as in our case, one needs to analyze also the causal structure of the $C_{ij}$ terms, which can also contribute to the negativity (see for example Eqs.~\eqref{AABB_neg}-\eqref{neg_modTetrahedron}). These terms account for correlations between detectors within the same subsystem. For spacelike-separated interactions, microcausality implies that the field-commutator vanishes, so $C_{ij}^-=X_{ij}^-=0$ has to be satisfied and no causal signalling occurs.

In this analysis, we consider a simple model of stationary and identical pointlike detectors
\begin{align}
    f_i(\mathbf{x}) &= \delta^{(3)}(\mathbf{x}-\mathbf{x}_i),\label{pointlike_smearing}
\end{align}
each having the same energy gap $\Omega_i\equiv\Omega$. For the switching, we use Gaussian functions and assume that all detectors interact for the same duration, $\chi_i(t)\equiv \chi(t)$, with
\begin{align}
    \chi(t) &= \frac{1}{\sqrt{2\pi \sigma^2}} e^{-\frac{t^2}{2\sigma^2}}, \label{gaussian_switching}
\end{align}
where the interaction time is set by $\sigma$.

\begin{figure}[htbp]
\centering
\begin{tikzpicture}[scale=0.95]

    % Axes
    \draw[->, draw opacity=0.3] (-3,0) -- (3,0) node[right] {$x$};
    \draw[->, draw opacity=0.3] (0,-2.5) -- (0,3) node[above] {$t$};

    % Ticks on t-axis
    \draw[draw opacity=0.3] (-0.1,1.75) -- (0.1,1.75);
    \node[right] at (0,1.75) {$T$};
    \draw[draw opacity=0.3] (-0.1,-1.75) -- (0.1,-1.75);
    \node[right] at (0,-1.75) {$-T$};

    % Worldline A
    \draw[draw=black, line width=1.5pt, path fading=line fade] (-1.75,-1.75) -- (-1.75,1.75);
    \node[above] at (-1.75,1.8) {detector $i$};

    % Worldline B
    \draw[draw=black, line width=1.5pt, path fading=line fade] (1.75,-1.75) -- (1.75,1.75);
    \node[above] at (1.75,1.8) {detector $j$};

    % Spacelike separation indicator
    \draw[<->, dashed] (-1.65,-0.3) -- (1.65,-0.3);
    \node[below, align=center] at (0,-0.3) {minimal causal\\disconnection $L\equiv 2T$};

\end{tikzpicture}
\caption{Spacetime diagram of two detectors with minimally spacelike-separated interaction regions. Minimal causal disconnection occurs at $x_{ij}=L\equiv2T$, exact for compactly supported switching and effective for Gaussian switching with appropriately chosen parameters.}
\label{fig:spacelike_worldlines}
\end{figure}
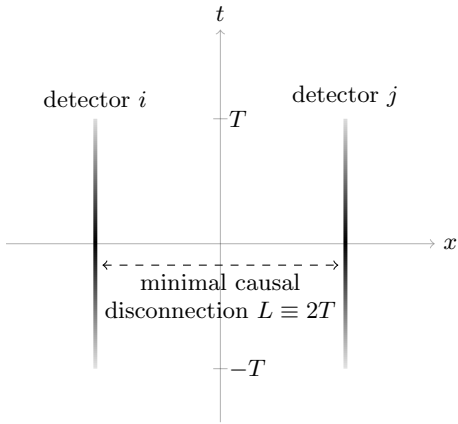

\begin{figure*}[htbp]
    \centering
    \begin{minipage}[c]{0.48\linewidth}
        \centering
        \includegraphics[width=\textwidth]{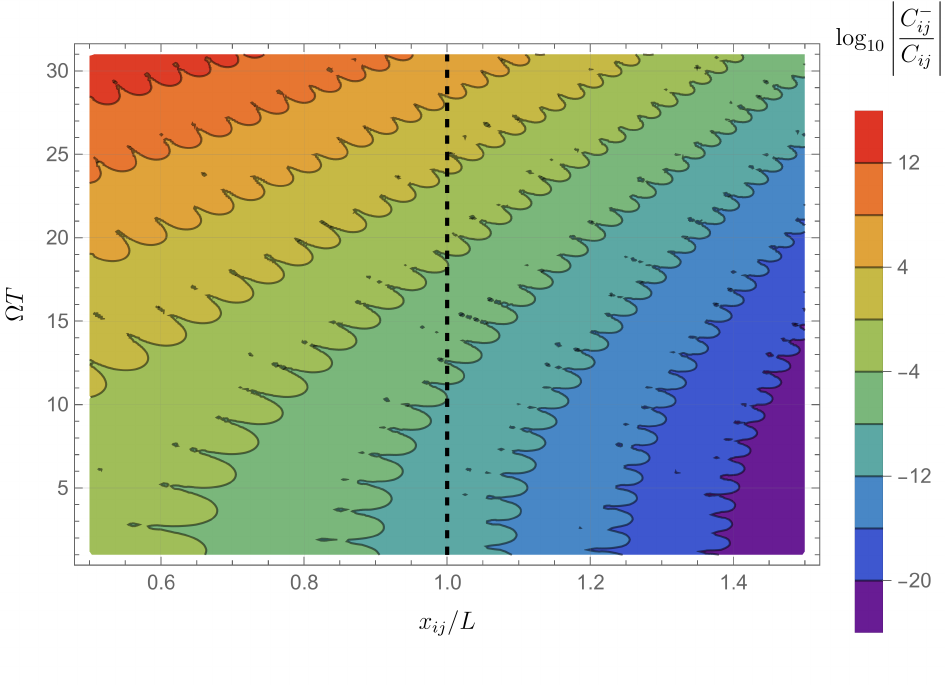}
    \end{minipage}
    \hfill
    \begin{minipage}[c]{0.46\linewidth}
        \centering
        \includegraphics[width=\textwidth]{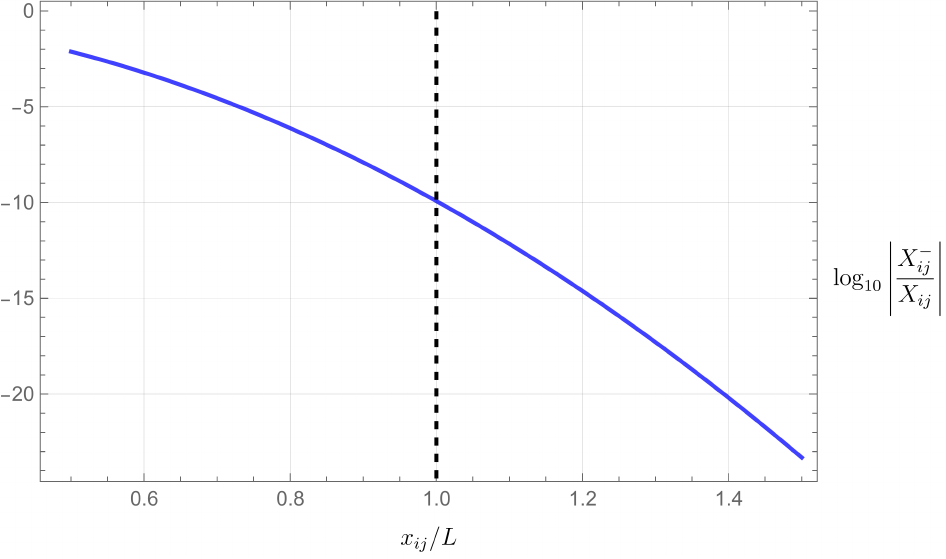}
    \end{minipage}
    \caption{Relative commutator contributions to the reduced density-matrix elements for pointlike detectors with Gaussian switching ($T=5\sigma$), illustrating effective causal disconnection at detector separation $x_{ij}=L\equiv2T$. The ratio $C_{ij}^-/C_{ij}$ oscillates around zero and depends on both $x_{ij}$ and energy gap $\Omega$, whereas $X_{ij}^-/X_{ij}$ depends only on $x_{ij}$. Near approximate causal disconnection the anticommutator terms $C_{ij}^+$ and $X_{ij}^+$, corresponding to genuine entanglement harvesting, dominate. At large $\Omega$, however, the commutator contribution $C_{ij}^-$, associated with field-mediated signalling, remains non-negligible.}
    \label{fig:CX}
\end{figure*}

The main advantage of using pointlike detectors and Gaussian switching functions is that we can obtain closed-form expressions for the density matrix elements. In Appendix \ref{app:computation_of_CX_terms}, we show that for detectors separated by $x_{ij}:=|\mathbf{x}_i-\mathbf{x}_j|$, we obtain
\begin{align}
    &\quad\,\,\,\, P = \tfrac{\lambda^2}{8\pi^2\sigma^2}\left( e^{-\Omega^2\sigma^2}-\sqrt{\pi} \Omega\sigma \,\text{erfc}(\sigma\Omega) \right), \label{P_expression}\\
    &\begin{cases}
    C_{ij}^+ = \tfrac{\lambda^2}{8\pi^{3/2}x_{ij}\sigma}e^{-\big(\tfrac{x_{ij}}{2\sigma}\big)^2}\text{Im}\left[ e^{i\Omega x_{ij}}\operatorname{erf}(i\tfrac{x_{ij}}{2\sigma}+\Omega\sigma) \right] , \\
    C_{ij}^- = -\tfrac{\lambda^2}{8\pi^{3/2}x_{ij}\sigma}\,e^{-\big(\tfrac{x_{ij}}{2\sigma}\big)^2}\,\sin(\Omega x_{ij}) ,
    \end{cases} \label{C+-_expressions}\\
    &\begin{cases}
    X_{ij}^+ = -\tfrac{\lambda^2}{4\pi^2 x_{ij}\sigma}\,e^{-\sigma^2\Omega^2}\,D\left(\tfrac{ x_{ij}}{2\sigma}\right) , \\
    X_{ij}^- = \tfrac{i\lambda^2}{8\pi^{3/2}x_{ij}\sigma}\,e^{-\sigma^2\Omega^2-\big(\tfrac{x_{ij}}{2\sigma}\big)^2} .
    \end{cases} \label{X+-_expressions}
\end{align}
Here, $\text{erf}(z)$ and $\text{erfc}(z)$ denote the error function and complementary error function and $D(z)$ denotes the Dawson integral function\footnote{\label{footnote_erf_Daw_func} The error function, complementary error function and the Dawson integral function are defined by
\begin{align*}
    \text{erf}(z) &= \frac{2}{\sqrt{\pi}} \int_0^z e^{-t^2} dt ,\quad
    \text{erfc}(z) = 1 - \text{erf}(z) , \\
    D(z) &= e^{-z^2}\int_0^z e^{t^2}dt.
\end{align*}
}. 

The downside of using  Gaussian switching functions is its lack of compact support in time, which we have to address. For these functions the conditions $C_{ij}^-=X_{ij}^-=0$ for causally disconnected detectors are satisfied only approximately, since Gaussians have infinite tails.

As argued in \cite{hectorMass,mariaPipoNew}, for non-compactly supported detectors, whether the subsystems are effectively causally disconnected must be assessed for each configuration, depending on the detector and field parameters. In our setup, this assessment depends on the detectors’ energy gaps and spatial arrangement, which we vary.

In the configurations that we consider, we can treat the interaction as effectively compact on the interval $[-T,T]$ with $T=5\sigma$, and define pointlike detectors $i$ and $j$ to be effectively causally separated when $x_{ij}=L\equiv 2T$ (see Fig.~\ref{fig:spacelike_worldlines}) for all the energy gaps considered. Figure \ref{fig:CX} shows that already for $x_{ij}\gtrsim L$ we have $X_{ij}\simeq X_{ij}^+$  and $C_{ij}\simeq C_{ij}^+$, except at large energy gaps where the contribution of $C_{ij}^-$ becomes non-negligible.

For the leading-order negativity evaluations presented in the following sections, we have verified that the $C_{ij}^-$ does not affect the optimization results where $x_{ij}>L$. This is because near optimum the same-subsystem detectors are sufficiently separated relative to their energy gap, so that replacing $C_{ij}$ with $C_{ij}^+$ leaves the leading-order negativity unchanged.

Moreover, in Sec.~\ref{sec:Comparison with compactly supported switching function} we compare our results with those obtained using strictly compactly supported switching functions, which are free from signalling effects. This comparison confirms that the Gaussian tails do not alter the optimization results. 

Throughout the following sections, results are presented using the dimensionless combinations $\Omega T$ and $x_{ij}/L$. By rescaling $t = T\tau$ in Eqs.~\eqref{P_term}-\eqref{X_term}, it follows that the density matrix elements ($P_i$, $C_{ij}$, $X_{ij}$) and the resulting leading-order negativity depend on the time scale $T$ only through these variables. Consequently, the reduced density matrix remains invariant if $T$ is varied while $\Omega T$ and $x_{ij}/L$ are held fixed. For numerical stability, we set $T = 0.01$ throughout our evaluations. 

\subsection{Two detectors}\label{sec:two-detect}

In this section, we revisit the entanglement harvesting protocol for two Unruh-DeWitt detectors, a scenario that has been extensively studied in the literature \cite{Valentini1991,reznik1,reznik2, Pozas-Kerstjens:2015, Pozas2016, Salton:2014jaa,Ng2014,mutualInfoBH,freefall,HarvestingSuperposed,Henderson2019,bandlimitedHarv2020,ampEntBH2020,carol,boris,ericksonWhen,twist2022,SchwarzchildHarvestingWellDone}. For two-qubit systems, the PPT criterion ensures that the state is separable if and only if the negativity vanishes. Building on the framework developed in Sec.~\ref{sec:method-leading-order-Neg}, we compute the leading-order negativity for this two-detector setup.

\begin{figure*}[htbp]
    \centering
    \begin{minipage}[c]{0.52\linewidth}
        \centering
        \includegraphics[width=\textwidth]{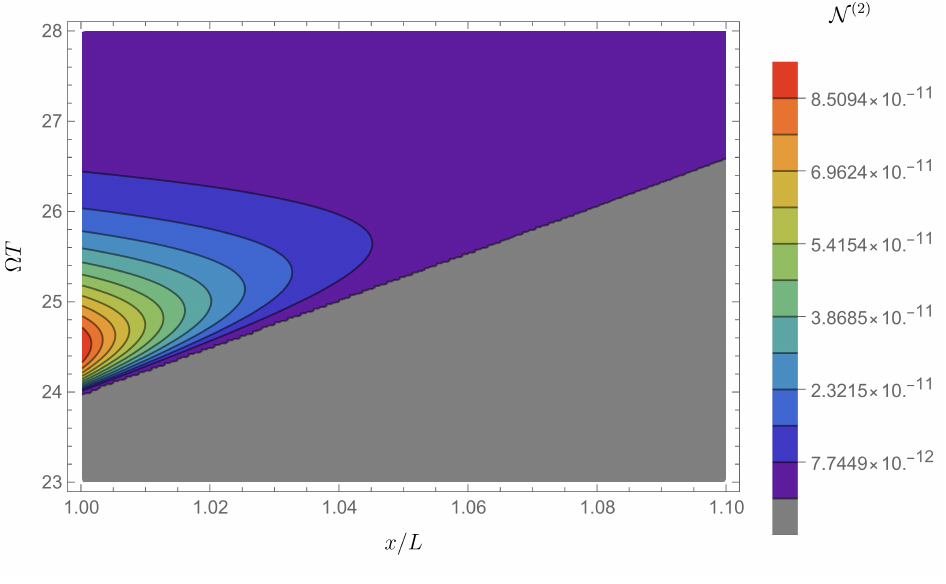}
    \end{minipage}
    \hspace{0.5cm}
    %\hfill
    \begin{minipage}[c]{0.43\linewidth}
        \centering
    \includegraphics[width=\textwidth]{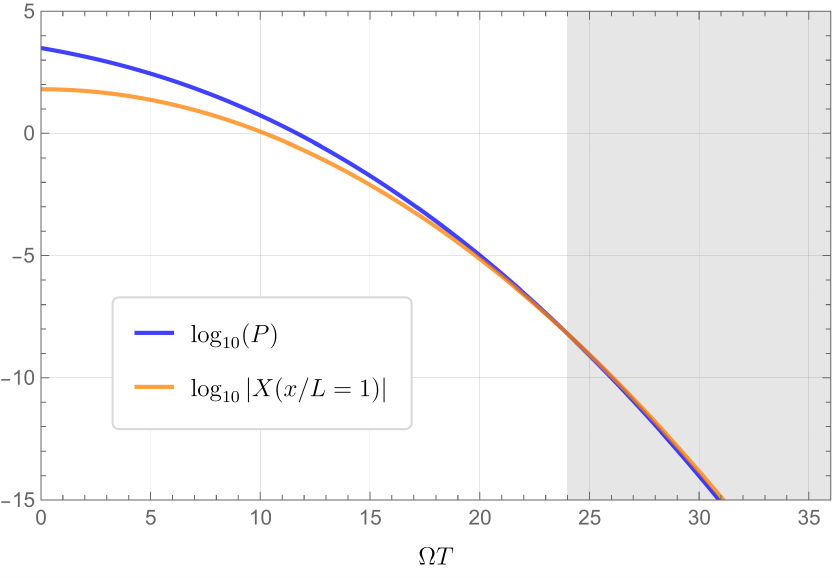}
    \end{minipage}
    \caption{The left panel shows a contour plot of the leading-order negativity between two pointlike Unruh-DeWitt detectors as a function of detector separation $x_{21}/L\equiv x/L$ (with $L\equiv 2T$ the minimal causal separation) and energy gap $\Omega T$, using Gaussian switching with $T=5\sigma$. The maximum occurs at $x/L=1$ and $\Omega T=24.49$, and the gray region indicates zero negativity. The right panel shows the magnitudes of the corresponding reduced density matrix elements, where the light gray region satisfies $|X|>P$, corresponding to non-zero entanglement harvesting.}
    %$\Omega T = 24.4938$
    \label{fig:two-detectors}
\end{figure*}

From $\tilde{\rho}_1$ block according to Eq.~\eqref{rho1_pt}, with identical detectors $P_2=P_1\equiv P$ and $X_{21}\equiv X_{x}$ separated by $x\equiv x_{21}$, we obtain a simple $2\times 2$ matrix,
\begin{align*}
    \tilde{\rho}_1 &=
    \begin{pmatrix}
        \mathcal{C}_{BB}^* & \mathcal{X}^\dag_{BA} \\
        \mathcal{X}_{BA} & \mathcal{C}_{AA}
    \end{pmatrix}
    =
    \begin{pmatrix}
        P_2 & X^*_{21} \\
        X_{21} & P_1
    \end{pmatrix}
    =
    \begin{pmatrix}
        P & X^*_x \\
        X_x & P
    \end{pmatrix}.
\end{align*}
This block has at most one negative eigenvalue, and the negativity for identical detectors is given by
\begin{align}
    \mathcal{N}(\hat\rho_{AB}) &= \max\{0,|X_x|-P\}+\mathcal{O}(\lambda^4) . \label{neg_2detectors}
\end{align}
For pointlike detectors with Gaussian switching, and under the condition that detectors in different subsystems remain effectively causally disconnected, we can substitute $P$ and $X \approx X^{+}$ using the expressions in \eqref{P_expression} and \eqref{X+-_expressions}. Combining these results, we obtain the following closed-form expression for the leading-order negativity
\begin{align}
    \mathcal{N}(x,\Omega) 
    &= 
    \max\bigg\{ 0,\, \frac{\lambda^2 e^{-\sigma^2 \Omega^2}}{8\pi^2 \sigma^2} \Big(\,
    \frac{2\sigma}{x}\, D\left( \frac{x}{2\sigma} \right) \nonumber \\
    &\quad + e^{\sigma^2 \Omega^2} \sqrt{\pi} \sigma \Omega\, \mathrm{erfc}(\sigma \Omega)-1 \Big)
    \bigg\} + \mathcal{O}(\lambda^4).\label{Negativity2detectors}
\end{align}
This explicitly depends on the distance between the detectors $x\equiv x_{21}$ and their energy gap $\Omega$. It is straightforward to check that $\frac{\partial\mathcal{N}}{\partial x}\leq 0$, so genuine entanglement harvesting maximum is at the minimum causal disconnection length $x/L=1$.

The leading-order negativity for the two-detector case is shown in Fig.~\ref{fig:two-detectors} as a function of the detector separation and energy gap. A clear maximum occurs at $\Omega T=24.49$ and $x/L=1$. From the right panel, we also see that a non-zero harvesting regime ($|X|>P$) can be achieved by increasing the energy gap.

\subsection{Three detectors}\label{sec:three-detect}

Previous work has used the $\pi$-tangle measure to study genuine tripartite entanglement harvesting with Unruh-DeWitt detectors \cite{Mendez-Henderson-Yoshimura-Mann-2022}. In particular, they analyzed three different detector geometries---linear, equilateral triangular, and scalene triangular---and found that the linear configuration allows for greater entanglement extraction among them. They also showed that tripartite entanglement can be harvested over a wider range of parameters than in the standard two-detector setup.

\begin{figure*}[htbp]
    \centering
    \begin{minipage}[c]{0.48\linewidth}
        \centering
        \begin{tikzpicture}
            \draw[->, gray] (-0.5,0) -- (3,0) node[right] {$q_1$};
            \draw[->, gray] (0,-0.5) -- (0,3) node[above] {$q_2$};

            % Draw a large circle around the node with faded color
            % Calculate the distance from (0,0) to node A
            \draw[opacity=0.3] (0,0) circle (2);   
        
            % Define nodes at specific coordinates
            \node (3) at (0,0) [draw, circle, fill=gray!50, inner sep=2pt] {3};
            \node (1) at (2,0) [draw, circle, fill=white, inner sep=2pt] {1};
            \node (2) at (4,3) [draw, circle, fill=white, inner sep=2pt] {2};
            \node (t) at (0.95,0.3) {$\theta$};
            
            % Draw connecting lines
            \draw[thick] (3) -- (1) -- (2) -- (3);
        
            % Add theta at vertex A
            \pic [draw, angle radius=8mm] {angle=1--3--2};
        
            % Add side labels
            \node[below] at ($(3)!0.5!(1)$) {\( L \)};
            \node[right] at ($(1)!0.5!(2)$) {\( x \)};
            \node[left] at ($(3)!0.5!(2)$) {\( r \)};
        \end{tikzpicture}
        \vspace{2cm}
        \label{fig:arrangements3}
    \end{minipage}
    \hfill
    \begin{minipage}[c]{0.48\linewidth}
        \centering
        \includegraphics[width=\textwidth]{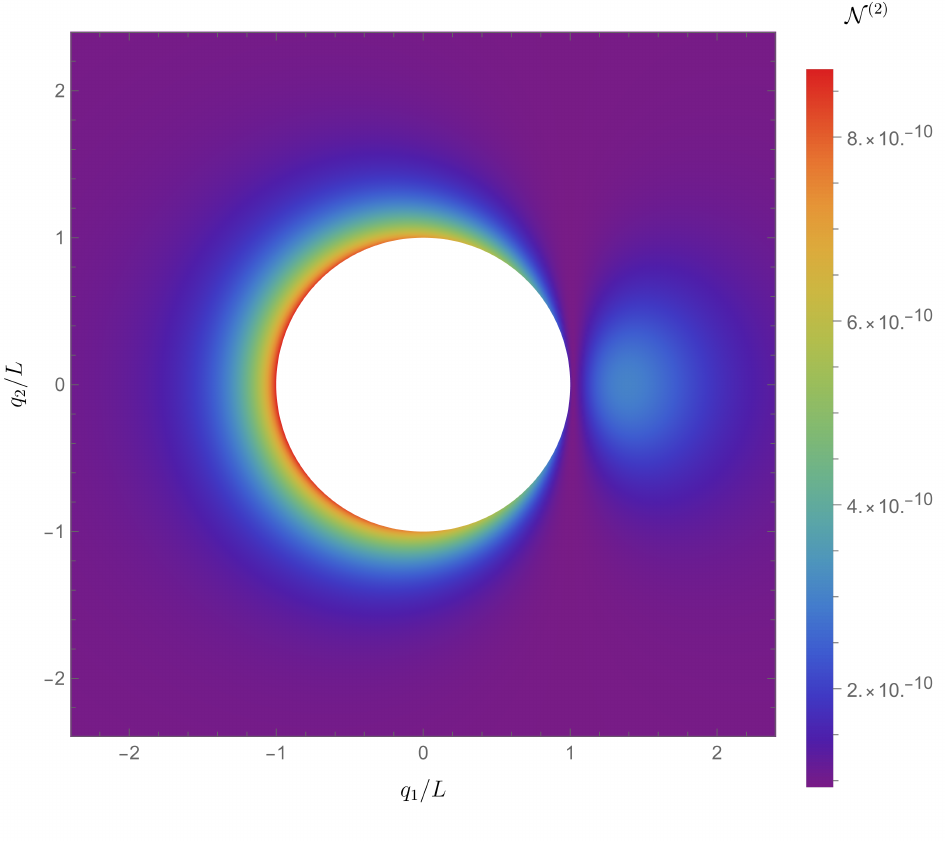}
        \label{fig:3detectors-density}
    \end{minipage}
\caption{
On the left is an illustration of the general three-detector arrangement in $\mathbb{R}^2$, with detectors 3 and 1 fixed at $(0,0)$ and $(L,0)$, respectively, where $L\equiv 2T=10\sigma$ denotes the minimal causal separation, while the position $(q_1,q_2)$ of detector 2 is varied. On the right is the corresponding density plot of the negativity, computed using the optimal two-detector energy gap ($\Omega T = 24.49$) as a representative example, where we have nonzero negativity for all positions of detector 2 with two clear local maxima.}
\label{fig:arrangements3detectors}
\end{figure*}

In this section, we carry out a full bipartite entanglement harvesting analysis between one detector and the remaining two, which constitutes the main contribution to the tripartite entanglement in the $\pi$-tangle measure.
We provide a systematic analysis covering all possible detector geometries compatible with the causal-separation condition. With this, we are able to demonstrate that the linear arrangement maximizes the entanglement harvested among all possible geometries. This result complements the ones obtained in \cite{Mendez-Henderson-Yoshimura-Mann-2022}.

To completely specify the position of the three detectors, one needs three parameters. However, by fixing one cross-subsystem pair separation to the minimal value allowed by the causal disconnection condition, only two parameters remain to be varied. This is motivated by the decaying behavior of entanglement with the distance between subsystems \cite{ubiquitous, Pozas-Kerstjens:2015}. We expect maximal entanglement when the detectors configuration is as compact as possible, with at least one cross-subsystem pair placed right at the causal boundary.

The setup is depicted in the left panel of Fig.~\ref{fig:arrangements3detectors}. Detectors 1 and 3 are fixed at the minimal causal separation $l$, while the position of detector 2 is varied. Its location is parameterized in polar coordinates $(r,\theta)$, with the origin placed at the position of detector 3. Owing to causal disconnection and the symmetries of the configuration, all inequivalent arrangements are captured by allowing $r$ to range from the minimal causal separation to infinity, and $\theta$ from $0$ to $\pi$.

For this setup, with the first two detectors in subsystem $A$ and the third detector in subsystem $B$, we can write $\tilde{\rho}_1$ in Eq.~\eqref{rho1_pt} by using distance-dependent indexing as
\begin{align}
    \tilde{\rho}_1 =
    \begin{pmatrix}
        P_3 & X^*_{32} & X^*_{31} \\
        X_{32} & P_2 & C^*_{21} \\
        X_{31} & C_{21} & P_1
    \end{pmatrix} %\nonumber\\
    &=
    \begin{pmatrix}
        P & X^*_{r} & X^*_{L} \\
        X_{r} & P & C^*_{x} \\
        X_{L} & C_{x} & P
    \end{pmatrix}. \label{3detectors}
\end{align}
Here, we have identical detectors $P_3=P_2=P_1\equiv P$, distance $r$ between detectors 2 and 3, and distance
$$
x(r,\theta)=r^2+L^2-2Lr\cos(\theta)
$$
between detectors 1 and 2 as illustrated in the left panel of Fig.~\ref{fig:arrangements3detectors}.

The leading-order negativity is given by the sum of the negative eigenvalues $\alpha_k$ of the $3\times3$ matrix in Eq.~\eqref{3detectors}. Shifting $\alpha=y+P$ reduces the characteristic equation to the depressed cubic
\begin{align*}
    y^3-S^2y-2R=0,
\end{align*}
with

\begin{align*}
    S \equiv \sqrt{|C_x|^2+|X_r|^2+|X_L|^2}, \quad R \equiv \mathrm{Re}(C_xX_rX_L^*) .
\end{align*}
Defining
\begin{align*}
    \rho &= \sqrt{\frac{4}{3}}S,\quad  \phi = \frac13 \arccos\left(\frac{3\sqrt3 R}{S^{3}}\right),
\end{align*}
the three eigenvalues can be written explicitly as
\begin{align}
    \alpha_k = P + \rho\cos\left( \phi + \frac{2\pi(k-1)}{3} \right), \quad k=1,2,3, \label{3detectors_eigenvalues}
\end{align}
and we get the negativity according to Eq.~\eqref{neg_definition} as the sum of the negative eigenvalues. Since the cosine terms are evenly spaced by $2\pi/3$, at most two eigenvalues can be negative simultaneously.

From Eq.~\eqref{3detectors_eigenvalues}, we see that the necessary and sufficient condition for vanishing leading-order negativity is determined by the smallest eigenvalue being non-negative:
\begin{align*}
    \mathcal{N}^{(2)}=0 \iff \left|\min_{k=1,2,3}\left\{\cos\left( \phi + \frac{2\pi(k-1)}{3} \right)\right\}\right|\leq P/\rho.
\end{align*}

Next, we perform a maximization analysis for the entanglement harvested where both the spatial arrangement and the energy gap are varied. The right panel in Fig.~\ref{fig:arrangements3detectors} shows the leading-order negativity as a function of all the possible positions of detector $2$ and for a fixed $\Omega$. The global maximum occurs for the linear ABA configuration, where neighboring detectors are separated by the minimal causal distance $L$ (Fig.~\ref{fig:ABA_arrangement}). In addition, a non-trivial local maximum occurs in the linear AAB configuration, where the two $A$ detectors are slightly offset from one another (Fig.~\ref{fig:AAB_arrangement}). 

For small energy gaps, where harvesting is strongest, nonzero harvesting occurs only near the ABA and AAB arrangements. In contrast, for larger energy gaps---where the two-detector harvesting condition $|X|>P$ is satisfied---harvesting occurs for all spatial positions of the third detector, as seen in Fig.~\ref{fig:arrangements3detectors} for the optimal two-detector energy gap.

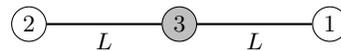
\begin{figure}[H]
    \centering
    \begin{tikzpicture}
        % Define nodes at specific coordinates
        \node (3) at (0,0) [draw, circle, fill=gray!50, inner sep=2pt] {3};
        \node (1) at (2,0) [draw, circle, fill=white, inner sep=2pt] {1};
        \node (2) at (-2,0) [draw, circle, fill=white, inner sep=2pt] {2};
        
        % Draw connecting lines
        \draw[thick] (2) -- (3) -- (1);     
    
        % Add side labels
        \node[below] at ($(3)!0.5!(1)$) {\( L \)};
        \node[below] at ($(3)!0.5!(2)$) {\( L \)};
    \end{tikzpicture}
    \caption{Three-detector ABA arrangement, with the detectors separated by the minimal causal distance $L$ corresponding to the globally optimal configuration at $\Omega T = 21.31$.}
    \label{fig:ABA_arrangement}
\end{figure}

\begin{figure}[H]
    \centering
    \begin{tikzpicture}
        % Define nodes at specific coordinates
        \node (3) at (0,0) [draw, circle, fill=gray!50, inner sep=2pt] {3};
        \node (1) at (2,0) [draw, circle, fill=white, inner sep=2pt] {1};
        \node (2) at (3,0) [draw, circle, fill=white, inner sep=2pt] {2};
        
        % Draw connecting lines
        \draw[thick] (3) -- (1);  
        \draw[thick, decorate, decoration={snake, amplitude=0.5mm, segment length=2mm},color=gray] (1) -- (2);
    
        % Add side labels
        \node[below] at ($(3)!0.5!(1)$) {\( L \)};
        \node[below] at ($(1)!0.5!(2)$) {\( x \)};
    \end{tikzpicture}
    \caption{Three-detector AAB arrangement corresponding to a locally optimal configuration, with one detector offset at $x/L=0.115$ and energy gap $\Omega T = 20.60$.}
    %$x/L=0.115499$
    \label{fig:AAB_arrangement}
\end{figure}
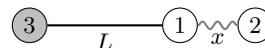

The corresponding locally maximal negativities, obtained by optimizing both the spatial configuration and the energy gap are
\begin{align}%24.49376
    \mathcal{N}^{(2)}_\text{ABA}(\Omega T = 21.31) &= 5.437 \cdot10^{-8}, \nonumber\\
    \mathcal{N}^{(2)}_\text{AAB}(\Omega T = 20.60) &= 1.650 \cdot10^{-8}, \\
    \mathcal{N}^{(2)}_\text{AB}(\Omega T = 24.49) &= 9.284 \cdot10^{-11}, \nonumber
\end{align}
where the last value, corresponding to the optimal two-detector configuration AB, is included for comparison with the three-detector cases.

\subsection{Four detectors}\label{sec:four-detect}
In this section, we explore entanglement harvesting in configurations involving four Unruh-DeWitt detectors. For the first time, we present an extensive analysis of how the spatial geometry of a four-detector setup influences entanglement harvesting. Introducing a fourth detector adds three spatial degrees of freedom to the optimization problem of maximizing negativity. %of identifying the optimal detector arrangement.
To avoid a high-dimensional multiparameter optimization, we restrict our analysis to a set of symmetric configurations, thereby reducing the problem to a single spatial parameter together with the detector energy gap. Within this framework, we derive analytic expressions for the leading-order negativity and numerically evaluate its behavior across the relevant parameter space.

For a system of four detectors, there are two possible bipartitions: a symmetric $2+2$ split and an asymmetric $3+1$ split. We focus primarily on the symmetric partition and consider a single asymmetric configuration for comparison.

\subsubsection{Symmetric $2+2$ partition}

For the $2+2$ partition, we get the $\tilde{\rho}_1$ block according to Eq.~\eqref{rho1_pt}
\begin{align}
    \tilde{\rho}_1 =
    \begin{pmatrix}
        \mathcal{C}_{BB}^* & \mathcal{X}^\dag_{BA} \\
        \mathcal{X}_{BA} & \mathcal{C}_{AA}
    \end{pmatrix} 
    &=
    \begin{pmatrix}
        P_{4} & C_{43} & X^*_{42} & X^*_{41} \\ 
        C^*_{43} & P_{3} & X^*_{32} & X^*_{31} \\ 
        X_{42} & X_{32} & P_{2} & C^*_{21} \\ 
        X_{41} & X_{31} & C_{21} & P_{1}
    \end{pmatrix} , \label{4detectors}
\end{align}
where, for identical detectors, we set $P_4=P_3=P_2=P_1\equiv P$. As in the three-detector case, the eigenvalue problem can be simplified by shifting $\alpha = y + P$, which eliminates $P$ from the characteristic equation and reduces it to a depressed quartic polynomial. Although the general roots can be expressed using Ferrari’s method, the resulting expressions are too cumbersome to write explicitly. However, for specific symmetric configurations, the negativity can still be expressed in a simple and intuitive form.

We analyze six different symmetric spatial configurations shown in Fig.~\ref{4detector_grid}. In these cases, with $X_{ij}\approx X_{ij}^+$, the components of $\tilde{\rho}_1$ in Eq.~\eqref{4detectors} are real-valued. When indexed by detector separation, many entries become equal, and the negativities can then be written explicitly as:
\begin{widetext}
    \begin{align}
        \mathcal{N}_\text{AABB} &= \sum_{s=\pm 1}\sum_{r=\pm 1} \max\Big\{0,\,-P + \tfrac{s}{2}(X^+_L+X^+_{L+2x}) + \tfrac{r}{2} \sqrt{4(C_x-sX^+_{L+x})^2+(X^+_{L+2x}-X^+_L)^2} \Big\} + \mathcal{O}(\lambda^4) , \label{AABB_neg}\\
        \mathcal{N}_\text{ABBA} &= \sum_{s=\pm 1}\sum_{r=\pm 1} \max\Big\{0,\,-P +\tfrac{s}{2}(C_x+C_{2L+x}) + \tfrac{r}{2} \sqrt{4(X^+_L-s X^+_{L+x})^2+(C_{x}-C_{2L+x})^2} \Big\} + \mathcal{O}(\lambda^4),\\
        \mathcal{N}_\text{ABAB} &= \sum_{s=\pm 1}\sum_{r=\pm 1}\max\Big\{0,\,-P +\tfrac{s}{2}(X^+_x+X^+_{3x}) + \tfrac{r}{2} \sqrt{4(C_{2x}+ X^+_{x})^2+(X^+_{x}-X^+_{3x})^2} \Big\} + \mathcal{O}(\lambda^4),\\
        \mathcal{N}_\text{rectangle} &= \sum_{s=\pm 1}\sum_{r=\pm 1}\max\Big\{0,\, -P +sC_{x}+rX^+_L+sr X^+_{\sqrt{L^2+x^2}}\Big\} + \mathcal{O}(\lambda^4),\label{neg_rectangle}\\
        \mathcal{N}_\text{skewedSquare} &= \sum_{r=\pm 1} \max\Big\{0,\, -P-\tfrac12(C_x+C_{\sqrt{4L^2-x^2}})+ \tfrac{r}{2} \sqrt{16 X^{+2}_L + \tfrac14(C_x-C_{\sqrt{4L^2-x^2}})^2} \Big\} + \mathcal{O}(\lambda^4),\\
        \mathcal{N}_\text{modTetrahedron} &= \sum_{r=\pm 1} \max\Big\{0,\, -P-C_x+2r X^+_L \Big\}+ \mathcal{O}(\lambda^4),\label{neg_modTetrahedron}
    \end{align}
\end{widetext}
with the last two negativity expressions omitting trivially non-negative eigenvalues of the form $P - C_{ij}$. The sum over $r$ and $s$ correspond to different eigenvalues possibly contributing to negativity.

The leading-order negativity expressions are evaluated over the spatial parameter $x$ and the energy gap $\Omega$ across their respective ranges, while $L$ is kept fixed at the minimal causal separation $L \equiv 2T$. The results for each configuration are presented in Fig.~\ref{4detector_grid} as contour plots of the leading-order negativity. In the right-hand column, we highlight the local maxima along with the corresponding parameter values at which they occur.

In the linear configurations AABB and ABBA, there are global maxima with slight offsets between detectors belonging to the same subsystem. This is consistent with the behavior observed in the three-detector AAB configuration shown in Fig.~\ref{fig:AAB_arrangement}.

Remarkably, the rectangle configuration is largely independent of the spatial parameter $x$, which appears as vertical lines in the contour plot in the Fig.~\ref{4detector_grid}. This independence becomes clear when analyzing the explicit eigenvalues of the leading-order negativity in Eq.~\eqref{neg_rectangle}. At low energy gaps, only one eigenvalue is negative, but as the gap increases, a second eigenvalue with opposite signs between $C_x$ and $X^+_{\sqrt{L^2+x^2}}$ becomes active, leaving the leading-order negativity
\begin{align*}
    \mathcal{N}_{\rm rectangle}\simeq -2(P+X^+_L) + \mathcal{O}(\lambda^4),
\end{align*}
which is independent of $x$. Notably, this value is exactly twice the leading-order negativity of the two-detector case in Eq.~\eqref{neg_2detectors}. This result is expected, since in the limit $x\to \infty$, the two detector pairs effectively reduce to independent subsystems, for which the general additivity of leading-order negativity in Eq.~\eqref{negativity_additivity} predicts precisely this behavior.

\begin{figure*}
    \centering
    \begin{tabular}{ |c|c|c| }

        \hline
        \textbf{Spatial configuration} & \textbf{Leading-order negativity} & \textbf{Local maxima} \\
        \hline

        \adjustbox{valign=c}{%
        \begin{subfigure}{0.25\textwidth}
        \scalebox{0.8}{\begin{tikzpicture}
            % Define nodes at specific coordinates
            \node (1) at (-2,0) [draw, circle, fill=white, inner sep=2pt] {1};
            \node (2) at (-1,0) [draw, circle, fill=white, inner sep=2pt] {2};
            \node (3) at (0.5,0) [draw, circle, fill=gray!50, inner sep=2pt] {3};
            \node (4) at (1.5,0) [draw, circle, fill=gray!50, inner sep=2pt] {4};
            
            % Draw connecting lines
            \draw[thick] (2) -- (3);   
            \draw[thick, decorate, decoration={snake, amplitude=0.5mm, segment length=2mm},color=gray] (1) -- (2);
            \draw[thick, decorate, decoration={snake, amplitude=0.5mm, segment length=2mm},color=gray] (3) -- (4);    
        
            % Add side labels
            \node[below] at ($(1)!0.5!(2)$) {\( x \)};
            \node[below] at ($(2)!0.5!(3)$) {\( L \)};
            \node[below] at ($(3)!0.5!(4)$) {\( x \)};
        \end{tikzpicture}}
        \caption{AABB, where $x/L\in [0,\infty)$}
        \end{subfigure}}
        &
        \adjustbox{valign=c}{%
        \begin{subfigure}{0.33\textwidth}
            \includegraphics[width=\textwidth]{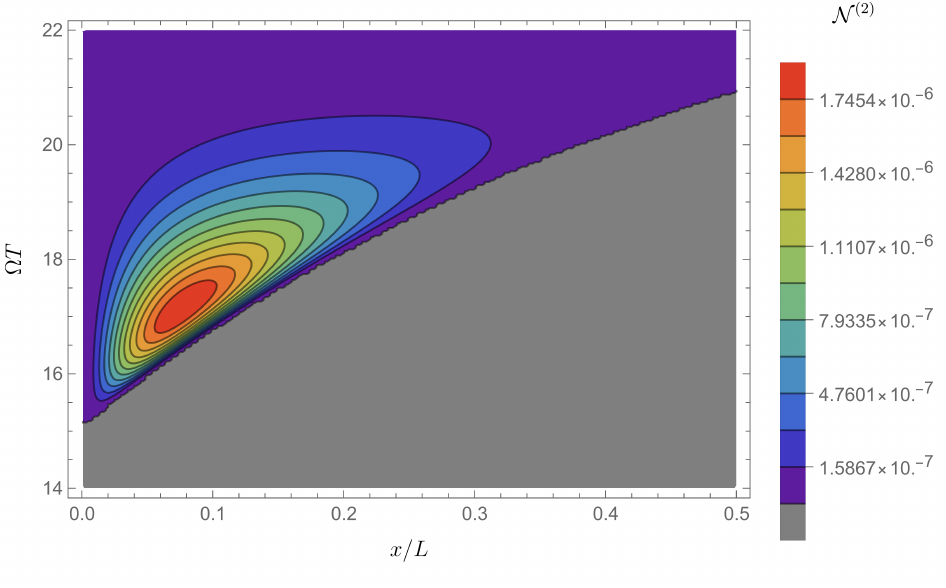}
        \end{subfigure}}
        &
        \hspace*{-0.6cm}
        \adjustbox{valign=c}{%
        \begin{subfigure}{0.25\textwidth}
        \begin{itemize}
            \item $\mathcal{N}^{(2)} = 1.904\cdot 10^{-6}$ with:
            \[
            \begin{cases}
                x/L = 0.077, \\
                \Omega T = 17.14.
            \end{cases}
            \]
        \end{itemize}
        \end{subfigure}}
        
        \\ \hline

        \adjustbox{valign=c}{%
        \begin{subfigure}{0.25\textwidth}
        \scalebox{0.8}{\begin{tikzpicture}
            % Define nodes at specific coordinates
            \node (1) at (-3,0) [draw, circle, fill=white, inner sep=2pt] {1};
            \node (4) at (-1.5,0) [draw, circle, fill=gray!50, inner sep=2pt] {4};
            \node (3) at (-0.5,0) [draw, circle, fill=gray!50, inner sep=2pt] {3};
            \node (2) at (1,0) [draw, circle, fill=white, inner sep=2pt] {2};
            
            % Draw connecting lines
            \draw[thick] (1) -- (4);   
            \draw[thick] (2) -- (3);   
            \draw[thick, decorate, decoration={snake, amplitude=0.5mm, segment length=2mm},color=gray] (3) -- (4);
        
            % Add side labels
            \node[below] at ($(1)!0.5!(4)$) {\( L \)};
            \node[below] at ($(4)!0.5!(3)$) {\( x \)};
            \node[below] at ($(3)!0.5!(2)$) {\( L \)};
        \end{tikzpicture}}
            \caption{ABBA, where $x/L\in [0,\infty)$}
        \end{subfigure}}
        &
        \adjustbox{valign=c}{%
        \begin{subfigure}{0.33\textwidth}
            \includegraphics[width=\textwidth]{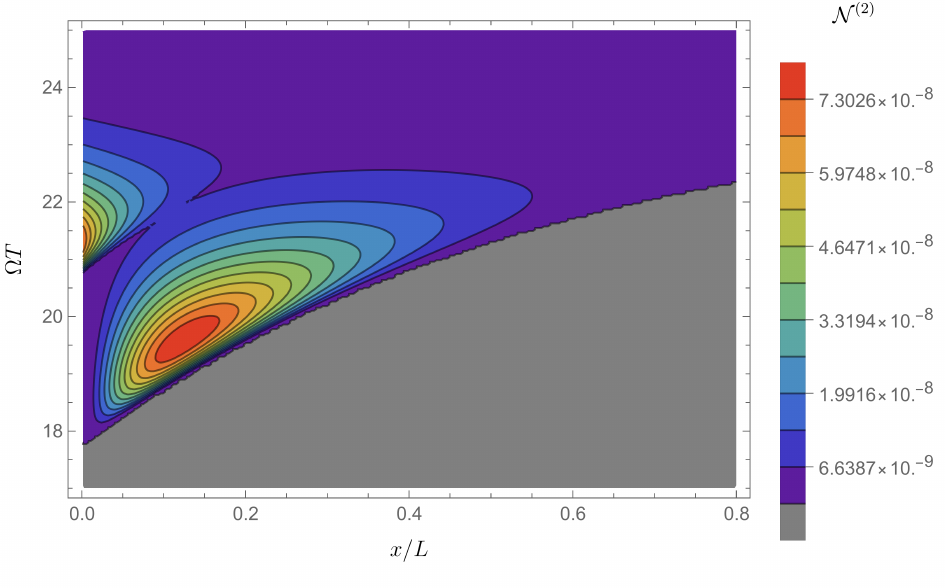}
        \end{subfigure}}
        &
        \hspace*{-0.6cm}
        \adjustbox{valign=c}{%
        \begin{subfigure}{0.25\textwidth}
        \begin{itemize}
            \item $\mathcal{N}^{(2)} = 7.970 \cdot10^{-8}$ with:
            \[
            \begin{cases}
                x/L = 0.124, \\
                \Omega T = 19.58,
            \end{cases}
            \]
            \item $\mathcal{N}^{(2)} = 7.451 \cdot 10^{-8}$ with:
              \[
                \begin{cases}
                  x/L = 0, \\
                  \Omega T = 21.31 .
                \end{cases}
              \]
        \end{itemize}
        \end{subfigure}}
        
        \\ \hline

        \adjustbox{valign=c}{%
        \begin{subfigure}{0.25\textwidth}
        \scalebox{0.8}{\begin{tikzpicture}
            % Define nodes at specific coordinates
            \node (1) at (-1.5,0) [draw, circle, fill=white, inner sep=2pt] {1};
            \node (3) at (0,0) [draw, circle, fill=gray!50, inner sep=2pt] {3};
            \node (2) at (1.5,0) [draw, circle, fill=white, inner sep=2pt] {2};
            \node (4) at (3,0) [draw, circle, fill=gray!50, inner sep=2pt] {4};
            
            % Draw connecting lines (only one)
            \draw[thick, decorate, decoration={snake, amplitude=0.5mm, segment length=2mm},color=gray] (1) -- (4);   

            % Draw again to put in front (piece-wise drawn lines did not work nicely)
            \node (1) at (-1.5,0) [draw, circle, fill=white, inner sep=2pt] {1};
            \node (3) at (0,0) [draw, circle, fill=gray!50, inner sep=2pt] {3};
            \node (2) at (1.5,0) [draw, circle, fill=white, inner sep=2pt] {2};
            \node (4) at (3,0) [draw, circle, fill=gray!50, inner sep=2pt] {4};
        
            % Add side labels
            \node[below] at ($(1)!0.5!(3)$) {\( x \)};
            \node[below] at ($(3)!0.5!(2)$) {\( x \)};
            \node[below] at ($(2)!0.5!(4)$) {\( x \)};
        \end{tikzpicture}}
            \caption{ABAB, where $x/L\in [1,\infty)$}
        \end{subfigure}}
        &
        \adjustbox{valign=c}{%
        \begin{subfigure}{0.33\textwidth}
            \includegraphics[width=\textwidth]{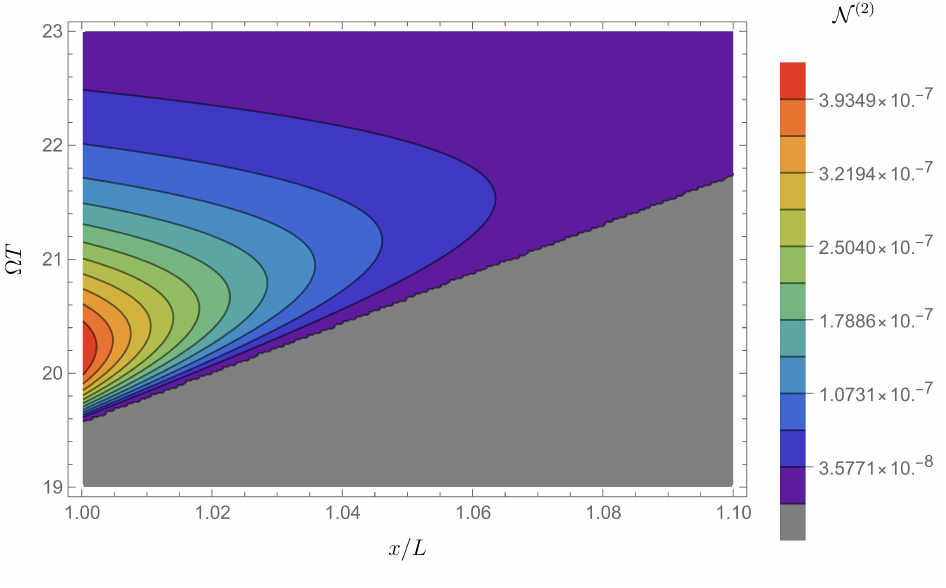}
        \end{subfigure}}
        &
        \hspace*{-0.6cm}
        \adjustbox{valign=c}{%
        \begin{subfigure}{0.25\textwidth}
        \begin{itemize}
            \item $\mathcal{N}^{(2)} =  4.293 \cdot 10^{-7}$ with:
            \[
            \begin{cases}
                x/L = 1, \\
                \Omega T = 20.19.
            \end{cases}
            \]
        \end{itemize}
        \end{subfigure}}
        
        \\ \hline

        \adjustbox{valign=c}{%
        \begin{subfigure}{0.25\textwidth}
        \scalebox{0.8}{\begin{tikzpicture}
            % Define nodes at specific coordinates
            \node (1) at (0,0) [draw, circle, fill=white, inner sep=2pt] {1};
            \node (2) at (0,2) [draw, circle, fill=white, inner sep=2pt] {2};
            \node (3) at (3,2) [draw, circle, fill=gray!50, inner sep=2pt] {3};
            \node (4) at (3,0) [draw, circle, fill=gray!50, inner sep=2pt] {4};
            
            % Draw connecting lines
            \draw[thick] (2) -- (3);   
            \draw[thick] (4) -- (1);   
            \draw[thick, decorate, decoration={snake, amplitude=0.5mm, segment length=2mm},color=gray] (1) -- (2);
            \draw[thick, decorate, decoration={snake, amplitude=0.5mm, segment length=2mm},color=gray] (3) -- (4);
        
            % Add side labels
            \node[left] at ($(1)!0.5!(2)$) {\( x \)};
            \node[right] at ($(3)!0.5!(4)$) {\( x \)};
            \node[above] at ($(2)!0.5!(3)$) {\( L \)};
            \node[below] at ($(1)!0.5!(4)$) {\( L \)};
        \end{tikzpicture}}
            \caption{Rectangle, where $x/L\in [0,\infty)$}
        \end{subfigure}}
        &
        \adjustbox{valign=c}{%
        \begin{subfigure}{0.33\textwidth}
            \includegraphics[width=\textwidth]{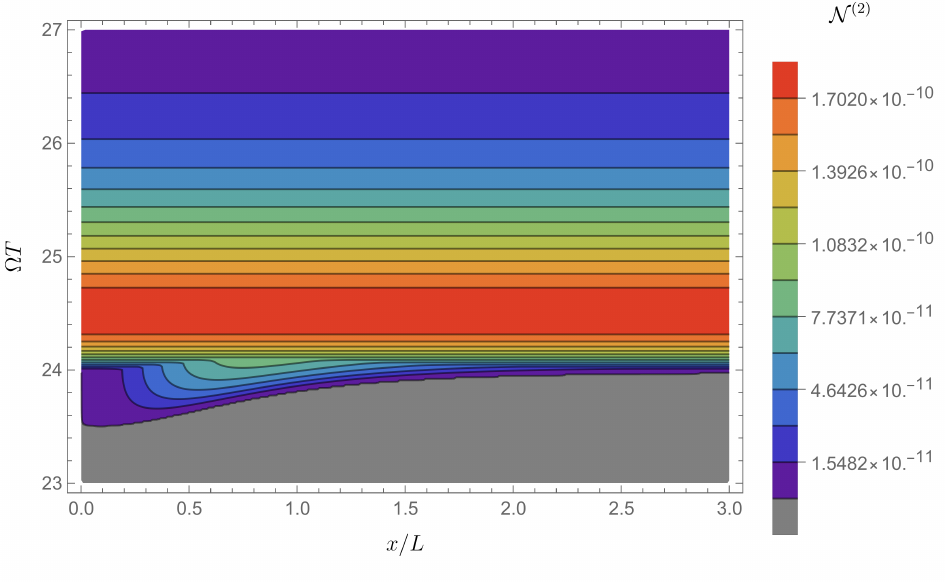}
        \end{subfigure}}
        &
        \hspace*{-0.6cm}
        \adjustbox{valign=c}{%
        \begin{subfigure}{0.25\textwidth}
        \begin{itemize}
            \item $\mathcal{N}^{(2)} = 1.857 \cdot 10^{-10}$ with:
            \[
            \begin{cases}
                x/L\in [0,\infty), \\
                \Omega T = 24.49 .
            \end{cases}
            \]
            The maximum $\mathcal{N}^{(2)}$ is two times the maximum $\mathcal{N}^{(2)}$ of two detector setup.
        \end{itemize}
        \end{subfigure}}
        
        \\ \hline

        \adjustbox{valign=c}{%
        \begin{subfigure}{0.25\textwidth}
        \scalebox{0.8}{\begin{tikzpicture}
            % Define nodes at specific coordinates
            \node (1) at (0,-1) [draw, circle, fill=white, inner sep=2pt] {1};
            \node (2) at (0,1) [draw, circle, fill=white, inner sep=2pt] {2};
            \node (3) at (2,0) [draw, circle, fill=gray!50, inner sep=2pt] {3};
            \node (4) at (-2,0) [draw, circle, fill=gray!50, inner sep=2pt] {4};
            
            % Draw connecting lines
            \draw[thick] (1) -- (3) -- (2) -- (4) -- (1);  
            \draw[thick, decorate, decoration={snake, amplitude=0.5mm, segment length=2mm},color=gray] (1) -- (2);
        
            % Add side labels
            \node[right] at ($(1)!0.5!(2)$) {\( x \)};
            \node[below right] at ($(1)!0.5!(3)$) {\( L \)};
            \node[below left] at ($(1)!0.5!(4)$) {\( L \)};
            \node[above right] at ($(3)!0.5!(2)$) {\( L \)};
            \node[above left] at ($(2)!0.5!(4)$) {\( L \)};
        \end{tikzpicture}}
            \caption{Skewed square, where $x/L\in [0,\sqrt2]$}
        \end{subfigure}}
        &
        \adjustbox{valign=c}{%
        \begin{subfigure}{0.33\textwidth}
            \includegraphics[width=\textwidth]{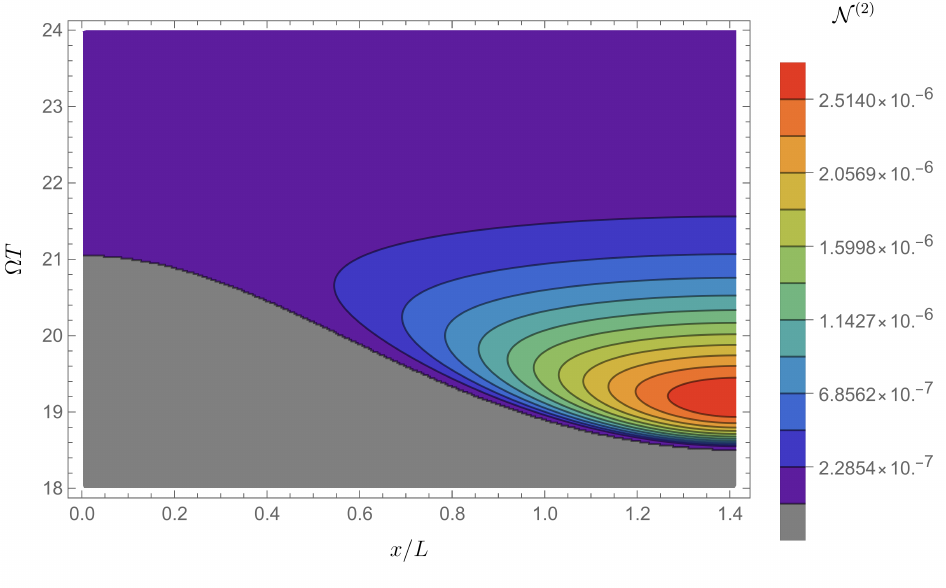}
        \end{subfigure}}
        &
        \hspace*{-0.6cm}
        \adjustbox{valign=c}{%
        \begin{subfigure}{0.25\textwidth}
        \begin{itemize}
            \item $\mathcal{N}^{(2)} = 2.743 \cdot 10^{-6}$ with:
            \[
            \begin{cases}
                x/L = \sqrt2 , \\
                \Omega T = 19.16 .
            \end{cases}
            \]
        \end{itemize}
        \end{subfigure}}
        
        \\ \hline 

        \adjustbox{valign=c}{%
        \begin{subfigure}{0.25\textwidth}
        \tdplotsetmaincoords{60}{120}
        \scalebox{0.8}{\begin{tikzpicture}[tdplot_main_coords]
            % Define vertices
            \coordinate (1) at (0, 0, 0);
            \coordinate (2) at (0, 1, 2);
            \coordinate (3) at (1, 3, 0);
            \coordinate (4) at (-1, 3, 0);
    
            % Draw edges and label side lengths
            \draw[thick, decorate, decoration={snake, amplitude=0.5mm, segment length=3mm},color=gray] (1) -- (2) node[midway, left] {\color{black}{\small $x$}};
            \draw[thick] (1) -- (3) node[midway, below left] {\small $L$};
            \draw[dashed] (1) -- (4) node[midway, above right] {\small $L$};
            \draw[thick] (2) -- (3) node[midway, left] {\small $L$};
            \draw[thick] (2) -- (4) node[midway, above right] {\small $L$};
            \draw[thick, decorate, decoration={snake, amplitude=0.5mm, segment length=2mm},color=gray] (3) -- (4) node[midway, below right] {\color{black}{\small $x$}};

            % Draw vertex nodes as balls with labels inside
            \node[shade, circle, draw, minimum size=18pt, inner sep=0pt,
          top color=white, bottom color=gray!20] at (1) {1};
            \node[shade, circle, draw, minimum size=18pt, inner sep=0pt,
          top color=white, bottom color=gray!20] at (2) {2};
            \node[shade, ball color=gray!50, circle, draw, minimum size=18pt, inner sep=0pt] at (3) {3};
            \node[shade, ball color=gray!50, circle, draw, minimum size=18pt, inner sep=0pt] at (4) {4};
        
        \end{tikzpicture}}
            \caption{Modified tetrahedron, where $x/L\in [0,\sqrt2]$}
        \end{subfigure}}
        &
        \adjustbox{valign=c}{%
        \begin{subfigure}{0.35\textwidth}
            \includegraphics[width=\textwidth]{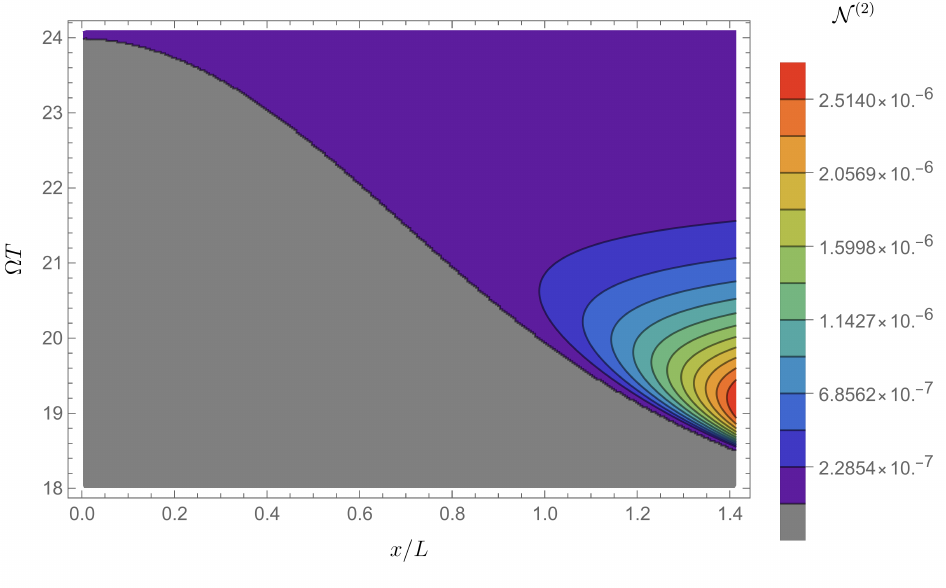}
            %\caption{Plot 2}
        \end{subfigure}}
        &
        \hspace*{-0.6cm}
        \adjustbox{valign=c}{%
        \begin{subfigure}{0.25\textwidth}
        \begin{itemize}
            \item $\mathcal{N}^{(2)} = 2.743 \cdot 10^{-6}$ with:
            \[
            \begin{cases}
                x/L = \sqrt2 , \\
                \Omega T = 19.16 .
            \end{cases}
            \]
        \end{itemize}
        \end{subfigure}}
        \\ \hline
    \end{tabular}
    \caption{Contour plots of the leading-order negativity for each of the symmetric four-detector configurations considered. The negativity is evaluated as a function of the spatial parameter $x$ and the detector energy gap $\Omega$, with $L$ fixed at the minimal causal separation ($L\equiv 2T=10\sigma$). The grey region corresponds to zero negativity. Local maxima and their corresponding parameter values are indicated.}
    \label{4detector_grid}
\end{figure*}

Among the configurations studied, the optimal four-detector arrangement is the so-called diagonal square configuration shown in Fig.~\ref{fig:diagonal_square_arr}. This configuration appears as the global maximum for both the skewed square and modified tetrahedron setups shown in Fig.~\ref{4detector_grid}, occurring at $x/L=\sqrt{2}$. Maximal entanglement is obtained when detectors belonging to different subsystems are placed in close proximity, while detectors within the same subsystem are maximally separated.

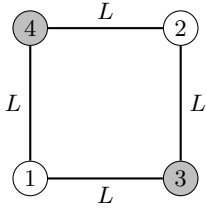
\begin{figure}[htb]
    \centering
        \begin{tikzpicture}
            % Define nodes at specific coordinates
            \node (1) at (0,0) [draw, circle, fill=white, inner sep=2pt] {1};
            \node (2) at (0,2) [draw, circle, fill=gray!50, inner sep=2pt] {4};
            \node (3) at (2,2) [draw, circle, fill=white, inner sep=2pt] {2};
            \node (4) at (2,0) [draw, circle, fill=gray!50, inner sep=2pt] {3};
            
            % Draw connecting lines
            \draw[thick] (2) -- (3);   
            \draw[thick] (4) -- (1);   
            \draw[thick] (1) -- (2);
            \draw[thick] (3) -- (4);
        
            % Add side labels
            \node[left] at ($(1)!0.5!(2)$) {\( L \)};
            \node[right] at ($(3)!0.5!(4)$) {\( L \)};
            \node[above] at ($(2)!0.5!(3)$) {\( L \)};
            \node[below] at ($(1)!0.5!(4)$) {\( L \)};
        \end{tikzpicture}
    \caption{The optimal four-detector configuration is so-called diagonal square arrangement where square side length is the minimal causal distance $L$ and detector energy gap is $\Omega T = 19.16$.}
    \label{fig:diagonal_square_arr}
\end{figure}

Finally, we emphasize that the number of detectors has a significant impact on the magnitude of the leading-order negativity. This is evident when comparing the optimal configurations for two detectors, three detectors (ABA in Fig.~\ref{fig:ABA_arrangement}), and four detectors (diagonal square in Fig.~\ref{fig:diagonal_square_arr}). As the number of detectors increases, entanglement can be harvested over a substantially broader range of energy gaps $\Omega$, leading to larger negativities that differ by several orders of magnitude:
\begin{align}
    \mathcal{N}^{(2)}_{\rm 4-det}(\Omega T = 19.16) &= 2.743 \cdot10^{-6}, \nonumber\\
    \mathcal{N}^{(2)}_{\rm 3-det}(\Omega T = 21.31) &= 5.437 \cdot10^{-8}, \\
    \mathcal{N}^{(2)}_{\rm 2-det}(\Omega T = 24.49) &= 9.284 \cdot10^{-11}. \nonumber
\end{align}
In Sec.~\ref{sec:Comparison with compactly supported switching function}, we repeat the analysis using different compactly supported switching functions. The optimal arrangements are unchanged and switching-independent when adding third and fourth detectors, while the widening of the harvesting regime with respect to the energy gap---and the scaling of the leading-order negativity with additional detectors---depends on the chosen switching.

\subsubsection{Asymmetric $3+1$ partition}

For the $3+1$ partition, we get the $\tilde{\rho}_1$ block according to Eq.~\eqref{rho1_pt}:
\begin{align}
    \tilde{\rho}_1 =
    \begin{pmatrix}
        \mathcal{C}_{BB}^* & \mathcal{X}^\dag_{BA} \\
        \mathcal{X}_{BA} & \mathcal{C}_{AA}
    \end{pmatrix} 
    &=
    \begin{pmatrix}
        P_{4} & X^*_{43} & X^*_{42} & X^*_{41} \\ 
        X_{43} & P_{3} & C^*_{32} & C^*_{31} \\ 
        X_{42} & C_{32} & P_{2} & C^*_{21} \\ 
        X_{41} & C_{31} & C_{21} & P_{1}
    \end{pmatrix}.\label{4detectors_asymm}
\end{align}
For an asymmetric partition, we consider a setup with two degrees of freedom in the spatial configuration. The setup is shown in Fig.~\ref{fig:asymm}, where we fix the minimal distance $L$ between the detectors of different subsystems but vary the angles as
\begin{align*}
    \theta_{21}&\in [0,2\pi),  \\
    \theta_{32}&\in [0,2\pi-\theta_{21}).
\end{align*}
Using the distance-based indexing for the $\mathcal{X}_{BA}$ entries, the block in Eq.~\eqref{4detectors_asymm} simplifies to:
\begin{align}
    \tilde{\rho}_1 &=
    \begin{pmatrix}
        P & X^*_L & X^*_L & X^*_L \\ 
        X_L & P & C^*_{32} & C^*_{31} \\ 
        X_L & C_{32} & P & C^*_{21} \\ 
        X_L & C_{31} & C_{21} & P
    \end{pmatrix}, \label{4detectors_asymm_special}
\end{align}
where the $C_{ij}$ terms depend on the detector separations which again depend on the angles varied:
\begin{align*}
    x_{21} &= 2L\sin\big(\tfrac{\theta_{21}}{2}\big), \\
    x_{32} &= 2L\sin\big(\tfrac{\theta_{32}}{2}\big), \\
    x_{31} &= 2L\sin\big(\tfrac{2\pi-\theta_{21}-\theta_{32}}{2}\big).
\end{align*}

By varying the angles $\theta_{21}$, $\theta_{32}$ and the energy gap $\Omega$, we find the optimal configuration as the equilateral triangle configuration with parameters:
\begin{align*}
    \theta_{21} &= \theta_{32}=2\pi/3, \\
    \Omega T &= 20.03,
\end{align*}
yielding leading-order negativity
\begin{align*}
    \mathcal{N}^{(2)}_{3+1}=5.723\cdot10^{-7}.
\end{align*}
These results are consistent with the pattern observed in the previous three-detector and symmetric four-detector analyses: optimal entanglement harvesting is achieved by minimizing the distance between detectors in different subsystems while maximizing the separation between detectors within the same subsystem.

\begin{figure*}[htbp]
    \centering
    \begin{minipage}[c]{0.48\linewidth}
        \centering
        \begin{tikzpicture}
            % Define coordinates
            \node (origin) at (0,0) [draw, circle, fill=gray!50, inner sep=1pt] {4};
            \node (D1) at (2,0) [draw, circle, fill=white, inner sep=1pt] {1};
            \node (D2) at (-1.593,1.210) [draw, circle, fill=white, inner sep=1pt] {2};
            \node (D3) at (-0.251,-1.984) [draw, circle, fill=white, inner sep=1pt] {3};
    
            % Draw connecting lines
            \draw[thick] (origin) -- (D1);
            \draw[thick] (origin) -- (D2);
            \draw[thick] (origin) -- (D3);
            \draw[thick, opacity=0.3] (D1) -- (D2);
            \draw[thick, opacity=0.3] (D2) -- (D3);
            \draw[thick, opacity=0.3] (D1) -- (D3);
    
            % Draw angles
            \pic [draw, angle radius=3mm, angle eccentricity=1.6, "$\theta_{21}$"] {angle=D1--origin--D2};
            \pic [draw, angle radius=3.8mm, angle eccentricity=1.6, "$\theta_{32}$"] {angle=D2--origin--D3};
    
            \draw[opacity=0.3] (origin) circle(2);    
    
            \node[below] at ($(origin)!0.5!(D1)$) {\( L \)};
            \node[above] at ($(D1)!0.5!(D2)$) {\( x_{21} \)};
            \node[left] at ($(D2)!0.5!(D3)$) {\( x_{32} \)};
            \node[right] at ($(D1)!0.5!(D3)$) {\( x_{31} \)};
        \end{tikzpicture}
        \vspace{2cm}
        \label{fig:arrangements3}
    \end{minipage}
    \begin{minipage}[c]{0.48\linewidth}
        \centering
        \includegraphics[width=0.85\textwidth]{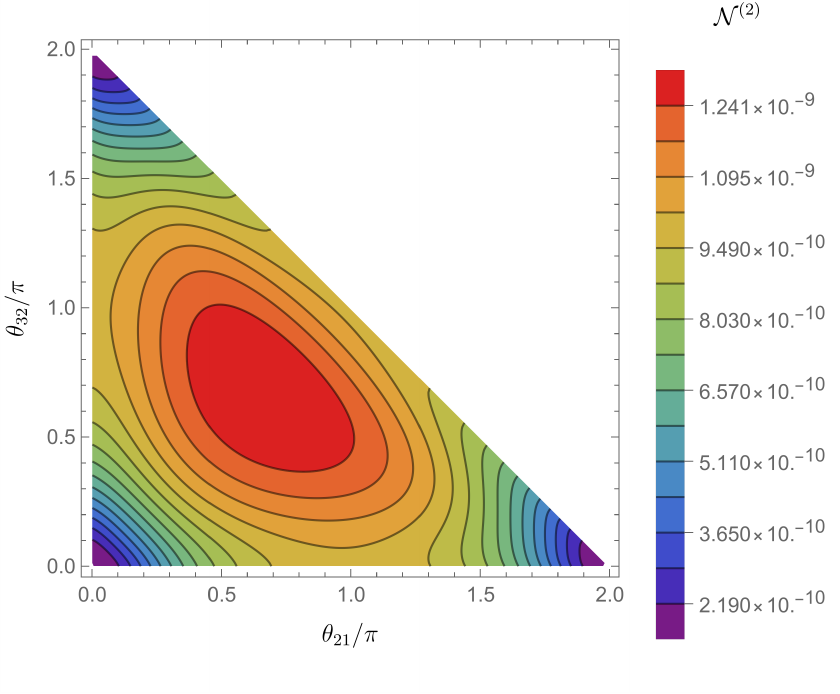}
        \label{fig:3detectors-density}
    \end{minipage}
\caption{
On the left is an illustration of an asymmetric $3+1$ partition configuration. By varying the angles $\theta_{21}$ and $\theta_{32}$, we obtain all planar geometric configurations with minimal separation $L\equiv 2T=10\sigma$ between different subsystem detectors. On the right is the corresponding density plot of the leading-order negativity, computed using the optimal two-detector energy gap ($\Omega T = 24.49$) as a representative example. In the optimal configuration subsystem $A$ detectors form an equilateral triangle around the center detector. 
}
\label{fig:asymm}
\end{figure*}

\subsubsection{Varying the spatial scale of the optimal arrangements}\label{Varying the spatial scale of the optimal configurations}

In this brief subsection we explore how entanglement changes when we rescale the distances between the detectors. In particular, we compute the leading-order negativity for different spatial arrangements as a function of the distance $l$ between subsystems $A$ and $B$. In Fig.~\ref{fig:scale-comparison} we compare the two-detector case with several optimal multi-detector configurations: the three-detector ABA (Fig.~\ref{fig:ABA_arrangement}), the four-detector diagonal square (Fig.~\ref{fig:diagonal_square_arr}), the asymmetric equilateral triangle, and the AABB configuration with $x/l = 0.08$ (Fig.~\ref{4detector_grid}).

From this analysis we see that for the same energy gap, adding more detectors allows to harvest entanglement for larger distances between subsystems. Among all the configurations explored, the four detectors AABB configuration is the one that allows entanglement harvesting for the largest distances between the $A$ and $B$ subsystems.

\begin{figure}[H]
    \centering
    \includegraphics[width=0.48\textwidth]{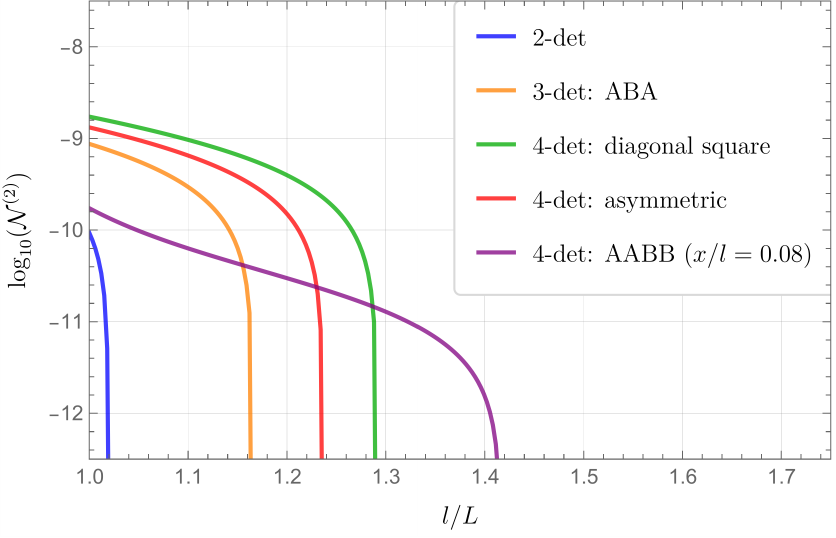}
    \caption{Leading-order negativity in $\log_{10}$ scale as a function of $l/L$, where $l$ is the distance between subsystems $A$ and $B$ and $L \equiv 2T$ is the minimal causal separation. The energy gap used to plot all the configurations is the optimal one for two detector $\Omega T=24.49$.}
    \label{fig:scale-comparison}
\end{figure}

\subsection{Alternative linear chain}\label{sec:Alternative linear chain}

In this section, we study the behavior of the leading-order negativity in a model where the detectors are arranged on a one-dimensional lattice with alternating subsystems $A$ and $B$. The detectors are separated by a uniform spacing $l$ between nearest neighbors, corresponding to the minimal causal separation. The setup is illustrated in Fig.~\ref{fig:linear_chain}.

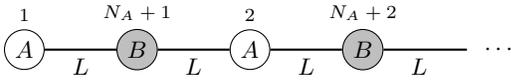
\begin{figure}
    \centering
    \begin{tikzpicture}
        % Define nodes at specific coordinates
        \node (1) at (0,0) [draw, circle, fill=white, minimum size=15pt, inner sep=2pt, label=above:{\scriptsize$1$}] {$A$};
        \node (2) at (1.5,0) [draw, circle, fill=gray!50, minimum size=15pt, inner sep=2pt, label=above:{\scriptsize$N_A+1$}] {$B$};
        \node (3) at (3,0) [draw, circle, fill=white, minimum size=15pt, inner sep=2pt, label=above:{\scriptsize$2$}] {$A$};
        \node (4) at (4.5,0) [draw, circle, fill=gray!50, minimum size=15pt, inner sep=2pt, label=above:{\scriptsize$N_A+2$}] {$B$};
        \node (5) at (6,0) {};
        % Draw connecting lines
        \draw[thick] (1) -- (2) -- (3) -- (4) -- (5);     
        % Add side labels
            \node[below] at ($(1)!0.5!(2)$) {\( L \)};
            \node[below] at ($(2)!0.5!(3)$) {\( L \)};
            \node[below] at ($(3)!0.5!(4)$) {\( L \)};
            \node[below] at ($(4)!0.5!(5)$) {\( L \)};
            \node[right] at ($(5)$) {\( \cdots \)};
    \end{tikzpicture}
    \caption{Linear chain setup with alternating subsystem detectors with minimal causal separation $L\equiv 2T$. Indices are shown above each detector.}
    \label{fig:linear_chain}
\end{figure}

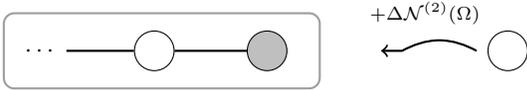
\begin{figure}
    \centering
    \begin{tikzpicture}
        % Draw the box first
        \draw[thick, rounded corners, color=gray!75] (-2,-0.5) rectangle (2.2,0.5);
    
        % Define nodes
        \node (0) at (-1.5,0) {$\cdots$};
        \node (1) at (0,0) [draw, circle, fill=white, minimum size=15pt] {};
        \node (2) at (1.5,0) [draw, circle, fill=gray!50, minimum size=15pt] {};
        \node (2b) at (2.9,0) {};

        \node (3) at (4.7,0) [draw, circle, fill=white, minimum size=15pt] {};
        \node (3b) at (4.4,0)  {};
        \node (3c) at (3.6,0.5)  {\scriptsize $+\Delta\mathcal{N}^{(2)}(\Omega)$};
        
        % Draw connecting lines
        \draw[thick] (0) -- (1) -- (2);
        \draw[decorate, decoration={coil, amplitude=1.4mm, segment length=17mm, mirror}, ->, thick] (3b) -- (2b);
    node[midway, above] {};
    \end{tikzpicture}
    \caption{Infinite-chain limit of Fig.~\ref{fig:linear_chain}. Each additional detector increases the leading-order negativity $\Delta\mathcal{N}^{(2)}(\Omega)$, and the optimal energy gap occurs at $\Omega T \simeq 18.88$.}
    \label{fig:linear_chain_limit}
\end{figure}

In order to compute the leading-order negativity, we construct the block $\tilde \rho_1$ from Eq.~\eqref{rho1_pt} for the present setup, where the indices of each subblock, as shown in Eq.~\eqref{rho1_block_opened}, follow the chain indexing in Fig.~\ref{fig:linear_chain}. By using the distance-dependent index labeling, $\mathcal{C}_{BB}$ and $\mathcal{C}_{AA}$ become Toeplitz matrices with entries $C_{2 n L}$ where the detector separation increases with the distance from the diagonal. Similarly, the off-diagonal block $\mathcal{X}_{BA}$ exhibits a Toeplitz structure with a banded pattern of $X_{(2n-1)L}$ terms. The pattern of these bands depends on whether the total number of detectors is even or odd. The explicit construction of $\tilde\rho_1$ for an arbitrary number of detectors is provided in Appendix~\ref{app:linear_chain_rho1}.

\begin{figure*}[htbp]
    \centering
    % Left large figure
    \begin{minipage}[c]{0.58\linewidth}
        \centering
        \includegraphics[width=\textwidth]{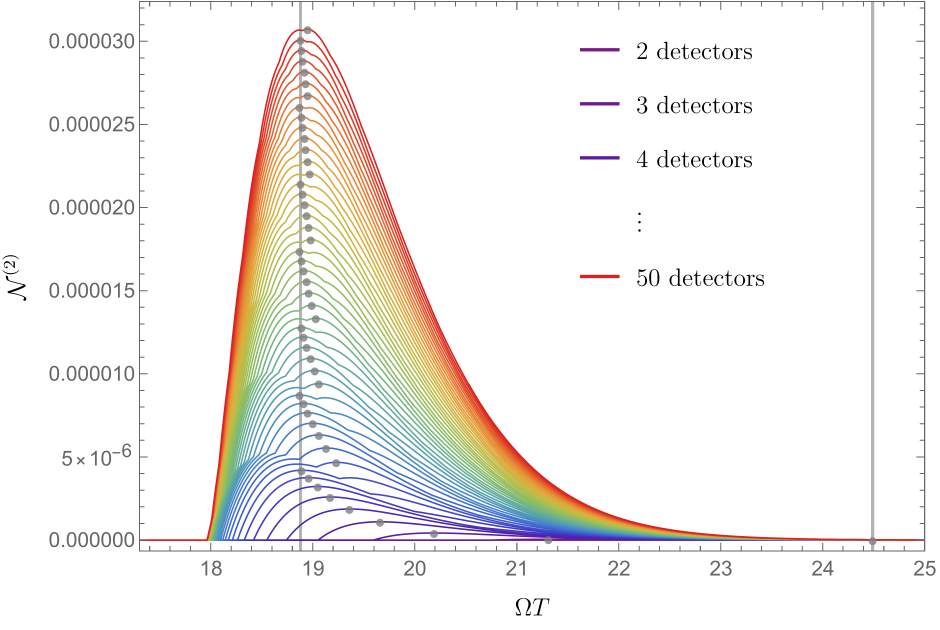}
    \end{minipage}%
    \hfill
    % Right two stacked smaller figures
    \begin{minipage}[c]{0.38\linewidth}
        \centering
        \includegraphics[width=\textwidth]{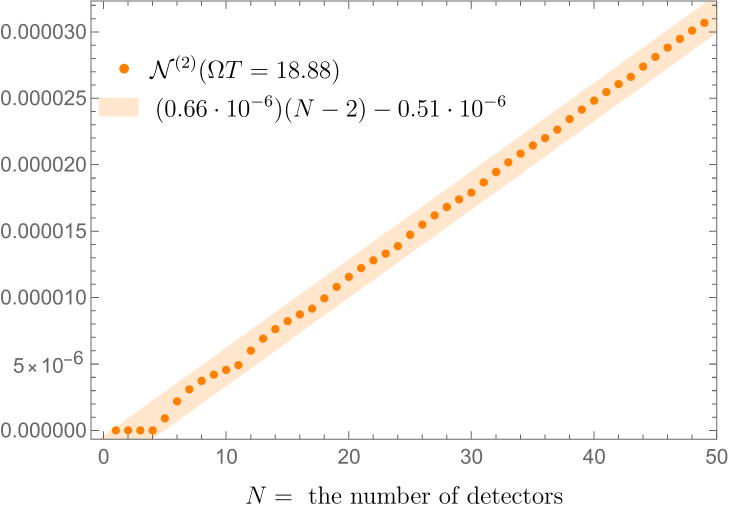}
        \vspace{0.5cm}
        \includegraphics[width=\textwidth]{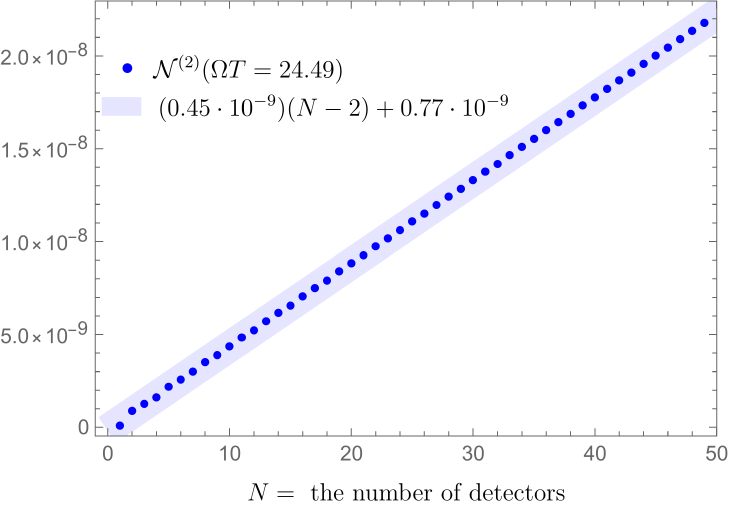}
    \end{minipage}
    
    \caption{On the left, we show the leading-order negativities for the alternative chain setup with up to 50 detectors as a function of the energy gaps $\Omega$. Global maxima are marked with gray dots, and the special cases --- the two-detector and infinite-detector limits --- with their optimal energy gap values indicated by vertical lines. In general, the negativity grows roughly linearly with the number of detectors for all $\Omega$, with the special-case values plotted on the right and fitted with a linear trend.}
    \label{fig:mountain}
\end{figure*}

Figure \ref{fig:mountain} (left panel) shows the results of the evaluated leading-order negativity as a function of the detector energy levels up to 50 detectors. From three detectors onward, the negativity grows roughly linearly. For a large number of detectors this is expected: in the infinite-chain limit, adding a single detector has little effect on the overall configuration and entanglement structure. Since negativity decays rapidly with distance between subsystems, each new detector primarily introduces entanglement with its nearest neighbors. The infinite-chain limit is illustrated in Fig.~\ref{fig:linear_chain_limit}.

As the number of detectors grows in the chain, the optimal energy gap converges to $\Omega T \simeq 18.88$ (see Fig.~\ref{fig:mountain}, left panel). In this large-detector regime, each additional detector contributes an approximately constant increase in negativity, $\Delta\mathcal{N}^{(2)}\simeq 0.66\cdot 10^{-6}$, as illustrated by the linear fit in the bottom-right panel. The negativity is computed from the negative eigenvalues of the partially transposed and linearly scaling block $\tilde \rho_1$; as more detectors are added, new eigenvalue contributions appear as small bumps in the left panel.

Finally, let us address the validity of the perturbative approach in this configuration. Although the leading-order negativity scales linearly with the number of detectors, $\mathcal{N}^{(2)} \propto N$, the perturbative expansion may break down for large $N$ at a fixed coupling $\lambda$. As \cite{Pozas-Kerstjens:2015, Morote-Balboa:2026dfe} suggest, beyond a certain chain length it may become necessary to include next-to-leading-order corrections or to adopt a fully non-perturbative treatment \cite{Simidzija-Martin2017, Polo-Martin2024}. This consideration is especially relevant for experimental implementations based on detector models, such as those proposed in \cite{Gooding:2023xxl, Perche:2025buo}, where $\lambda$ is constrained by the physical setup. In such cases, predicted negativities may already approach the regime in which leading-order perturbation theory becomes unreliable \cite{Morote-Balboa:2026dfe}.

However, determining how higher-order contributions to the negativity scale with $N$ is nontrivial. It requires identifying which eigenvalues of the partially transposed density matrix become negative at a given order, evaluating their magnitudes, and understanding how their sum scales with $N$. Each of these steps demands careful analysis, which we leave for future work.

\subsection{Comparison with compactly supported switching functions}\label{sec:Comparison with compactly supported switching function}

We performed the main analysis of section \ref{sec:Searching for the optimal spatial configurations} using pointlike detectors with Gaussian switching, with the interaction effectively confined to $T = 5\sigma$ as described in Sec.~\ref{sec:setup_and_switching}. For completeness, we now compare those results with switching functions that are strictly compactly supported, namely truncated Gaussians and compactified polynomials. Smooth compactly supported functions (e.g., bump functions) were also considered, but do not yield non-zero leading-order negativity in this setup. In all comparisons, we fix the total interaction strength $\int \chi(t)\,dt=1$ and restrict the support to the finite interval $t \in [-T,T]$.

\subsubsection{Truncated Gaussian switching}

\begin{figure*}[htbp]
    \centering
    \hspace*{0.05\textwidth}%
    \includegraphics[width=0.85\textwidth]{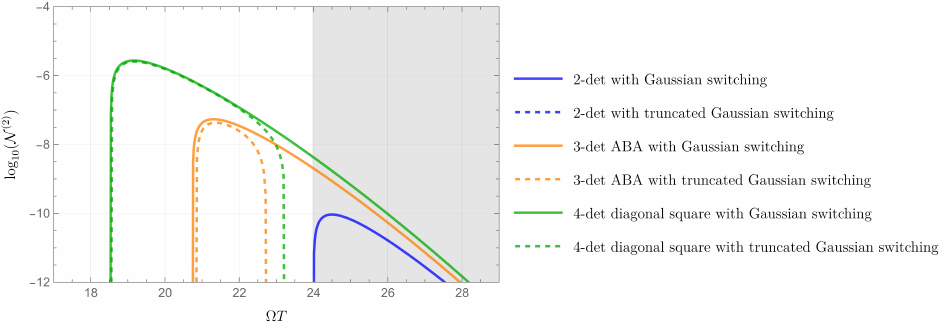}
    \includegraphics[width=0.74\textwidth]{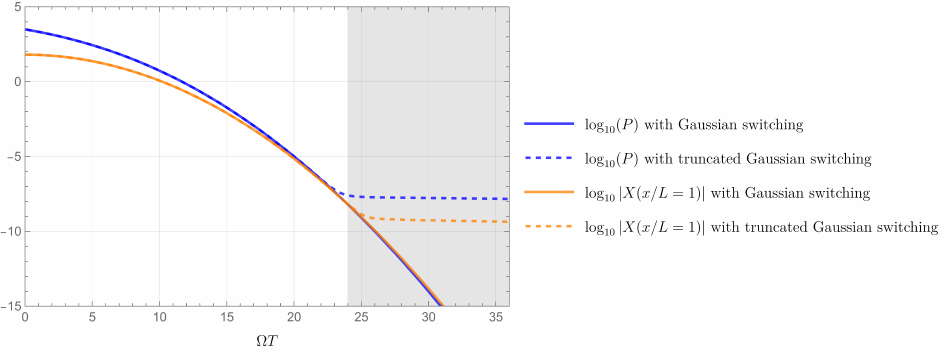}
    \caption{Comparison of multi-detector harvesting with truncated and standard Gaussian switching. The upper panel shows the leading-order negativity for the optimal arrangements with different numbers of detectors. The light gray region shows the nonzero harvesting regime for two detectors with Gaussian switching, where $|X|>P$. This condition never occurs for truncated Gaussian, as seen in the lower panel showing the density matrix elements.}
    \label{fig:truncated_gaussian}
\end{figure*}

The simplest way to enforce strict compact support is to truncate the Gaussian switching in Eq.~\eqref{gaussian_switching} by imposing
a hard cutoff,
\begin{align*}
    \chi(t) &=\frac{1}{\sqrt{2\pi \sigma^2}} e^{-\frac{t^2}{2\sigma^2}}\,\Theta(T-|t|),
\end{align*}
where the tails are removed at $T = 5\sigma$ by multiplication with the Heaviside step function $\Theta$. For a clearer comparison, we do not renormalize the truncated Gaussian\footnote{Moreover, the area within $5\sigma$ of a Gaussian already accounts for approximately $99.9999~\%$ of the total area.}.

The truncation introduces non-smooth behavior at $t=\pm T$. While the Gaussian is $C^\infty$, multiplying it by $\Theta(T-|t|)$ makes the switching only piecewise smooth, with discontinuities at the cutoff points.

As shown in Fig.~\ref{fig:truncated_gaussian}, for small energy gaps the truncated Gaussian switching closely matches the $\tilde{\rho}_1$ elements (in Eq.~\eqref{rho1_pt}) and negativity of the standard Gaussian. However, as the gap increases, these elements deviate significantly, leading to noticeable differences in the resulting negativity.

Our results show that with two detectors, the truncated Gaussian does not harvest any entanglement at leading order, as shown in the upper panel of Fig.~\ref{fig:truncated_gaussian}. As described in Sec.~\ref{sec:two-detect}, two detectors harvest non-zero entanglement if and only if $|X|>P$. For the standard Gaussian this condition is satisfied when the energy gap is large enough, as shown in Fig.~\ref{fig:two-detectors}. In contrast, as shown in the lower panel of Fig.~\ref{fig:truncated_gaussian}, truncating the Gaussian causes the matrix elements $|X|$ and $P$ to plateau before the condition $|X|>P$ is reached.

When using more detectors, we can harvest entanglement in a way that is very similar to the standard Gaussian case. For three and four detectors, we find the optimal spatial arrangements to be the ABA configuration (Fig.~\ref{fig:ABA_arrangement}) and the diagonal square (Fig.~\ref{fig:diagonal_square_arr}), with similar optimal energy gap values, as shown in Fig.~\ref{fig:truncated_gaussian}. The main difference again appears at large energy gaps: instead of a smoother decay as $\Omega$ increases, the leading-order negativity drops to zero soon after the optimal energy gap values.

\subsubsection{Compactified polynomial switching}

\begin{figure*}[h]
    \centering
    \includegraphics[width=0.8\textwidth]{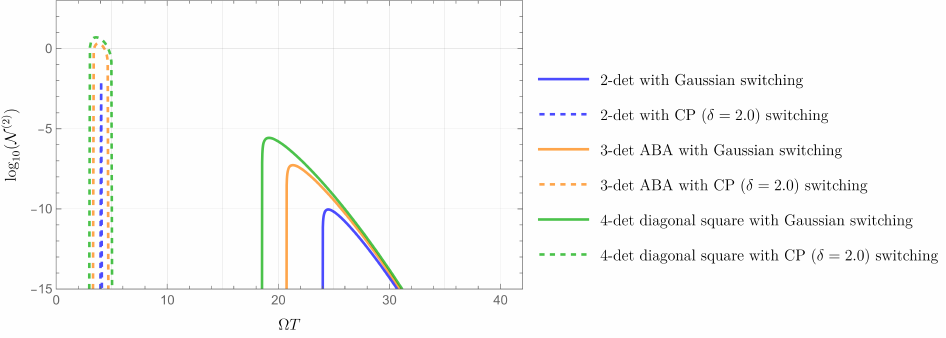}
    \includegraphics[width=0.8\textwidth]{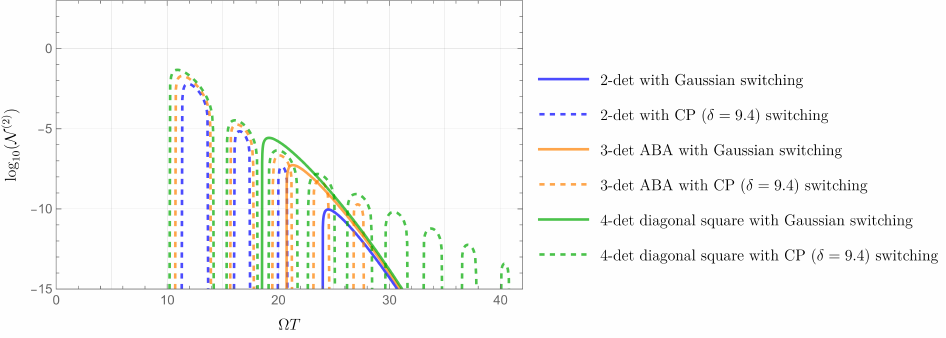}
    \includegraphics[width=0.8\textwidth]{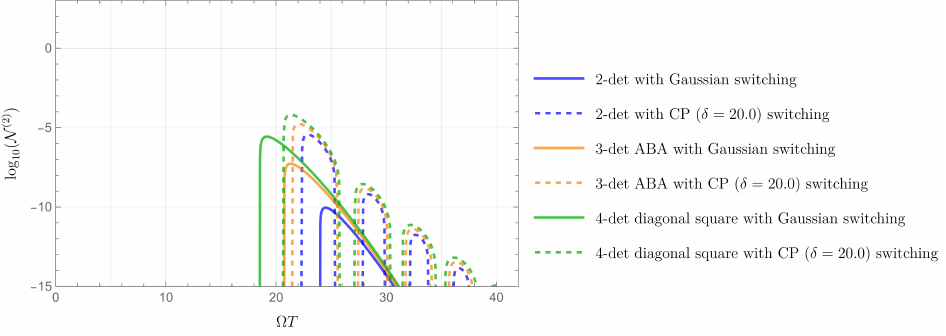}
    \caption{Comparison of multi-detector harvesting using compactified polynomial (CP) and Gaussian switching, for optimal arrangements. Leading-order negativity is plotted as a function of the energy gap, with different panels showing CP results for various differentiability classes, determined by the parameter $\delta$ in Eq.~\eqref{cp_switching}. CP switching produces periodic harvesting regimes, with low differentiability yielding strong entanglement at small gaps and higher differentiability giving more consistent harvesting at larger gaps. Adding detectors expands the harvesting range in energy gap more for Gaussian than CP switching, leading to larger entanglement increases.}
    \label{fig:cp_comparison}
\end{figure*}

In this subsection we introduce a different family of compactly supported polynomial functions with differentiability class  $C^{\delta-1}$ (see \cite{ubiquitous}),
\begin{align}
    \chi^{(\delta)}(t) &= N_\text{norm}\left( 1-\frac{t^2}{T^2} \right)^{\delta}\Theta(T-|t|). \label{cp_switching}
\end{align}
The parameter $\delta$ controls the differentiability at the boundary $t=\pm T$: larger $\delta$ means the cutoff is smoother. In our study, we mainly consider three representative cases:
\begin{itemize}
    \item Minimal value allowing non-zero two-detector harvesting: $\delta=2.0$,
    \item The value that minimizes the relative $L_2$ error with respect to the Gaussian (for $T=5\sigma$): $\delta=9.4$,
    \item A highly differentiable case representing very smooth behavior: $\delta = 20.0$.
\end{itemize}

With compactified polynomial switching, the optimal spatial arrangements are again the ABA configuration (Fig.~\ref{fig:ABA_arrangement}) and the diagonal square (Fig.~\ref{fig:diagonal_square_arr}). This shows that the optimal arrangements are not specific to Gaussian switching functions, although compactified polynomial switching produces clear differences in entanglement harvesting.

The density-matrix elements of $\tilde\rho_1$ generally decrease with increasing energy gap, so smaller gaps harvest more entanglement, similar to the Gaussian switching. On a finer scale, the elements oscillate, producing periodic regions of nonzero harvesting (see Fig.~\ref{fig:cp_comparison}) not seen with Gaussian switching.

Figure \ref{fig:cp_comparison} shows that the two-detector harvesting condition, $|X|>P$, is satisfied periodically as the energy gap increases. Adding a third and fourth detector widens the parameter regimes and results in more consistent periodical harvesting regimes. However, the energy-gap range widens less than in the Gaussian case, so negativity does not appear to scale exponentially when additional detectors are added with this switching.

With compactified polynomial switching, we can analyze how increasing the differentiability class affects the harvesting. The diagonal square configuration begins harvesting at $\delta=1.0$, while the two-detector condition $|X|>P$ is first satisfied at $\delta=2.0$. For these smallest values of $\delta$, there is initially only a single compact region of nonzero harvesting as a function of the energy gap. However, as shown in Fig.~\ref{fig:cp_comparison}, increasing the differentiability class shifts the harvesting regimes towards higher energy gaps and produces new regimes that appear periodically at higher gaps. Because the strongest harvesting occurs at low energy gaps, low $\delta$ is optimal, but it is also more sensitive to the exact value of the energy gap.

\section{Conclusions}\label{sec: conclusions}

In this work, we have examined a system of $N=N_A+N_B$ Unruh-DeWitt detectors weakly coupled to a scalar field and analyzed entanglement harvesting between two subsystems: $A$, containing $N_A$ detectors, and $B$, containing $N_B$ detectors. We quantified the entanglement harvested between the subsystems using the leading-order negativity, which probes NPT entanglement, and showed that it is additive for product states at this order. In this setup, the leading-order negativity is fully determined by a submatrix supported on the one-excitation subspace---generalizing the result of \cite{Kukita-Nambu-2017} via a matrix formulation that imposes no approximations on detector separations. Since the size of this submatrix grows only linearly with the number of detectors, rather than exponentially as in the case of the full reduced density matrix, this result significantly simplifies the analysis of multi-detector bipartite entanglement harvesting.

The central objective of the paper has been to investigate how the spatial arrangement of multiple detectors influences the entanglement that can be harvested between causally disconnected regions. To this end, we explored a variety of detector configurations, increasing the number of detectors in each subsystem and distributing them in space in different ways.

For a system of three detectors, we derived analytic expressions for the negativity for arbitrary spatial placements and identified the configuration that maximizes entanglement: a linear ABA arrangement of the detectors. This result is consistent with—and extends—the tripartite analysis of \cite{threeHarvesting2022}. For four detectors, we systematically compared several symmetric geometries and obtained analytic formulas for the negativity in each case. After optimizing over separations and energy gaps, we found that the diagonal-square configuration yields the greatest amount of harvested entanglement. Additionally, we studied a linear chain with alternating subsystem detectors, showing that the harvested entanglement grows linearly with the number of detectors. As more detectors are added, new eigenvalues of the partially transposed density matrix contribute to the negativity.

A key conclusion emerging from our analysis is that optimal entanglement harvesting is achieved when the separation between detectors belonging to different subsystems is minimized, while the separation between detectors within the same subsystem is maximized. This behavior can be interpreted as a manifestation of entanglement monogamy: strengthening entanglement within a subsystem reduces the ability to generate entanglement across the bipartition. We also observed that increasing the number of detectors broadens the parameter ranges—both in energy gaps and in spatial separations—over which entanglement can be successfully extracted from the field.

Finally, we compared our results using different types of compactly supported switching functions. Two points stand out: the optimal spatial configurations for entanglement harvesting remain unchanged, and the nonzero harvesting energy-gap regimes depend strongly on both the switching function and the arrangement. Truncated Gaussian switchings produce multi-detector harvesting similar to the Gaussian case for small energy gaps, while compactified polynomial switching alters the harvesting profile: low differentiability gives strong harvesting at small energy gaps, while higher differentiability produces periodic but more consistent harvesting regimes at larger gaps. 

We hope that the results presented here will open the door to new investigations in relativistic quantum information processing with multiple detectors. While our analysis has focused on the relatively simple setting of Minkowski spacetime and bipartite entanglement harvesting, future work may extend these techniques to more general curved spacetimes, as well as to genuinely multipartite harvesting scenarios. Furthermore, this work may provide guidance for future experimental realizations of the entanglement harvesting protocols, in setups involving multiple detectors.

\section*{Code availability}

The Mathematica notebooks used for the analysis of Sec.~\ref{sec:Searching for the optimal spatial configurations} are publicly available at:
\href{https://github.com/sansalv/bipartite-entanglement-harvesting-with-multiple-detectors}{GitHub repository}. A citable, archived version of the notebooks is available via Zenodo DOI~\cite{Salomaa2026notebooks}.

\begin{acknowledgments}
The authors would like to thank Eduardo Martín-Martínez for his helpful comments and valuable feedback. SS, SNG and EKV acknowledge the financial support of the Research Council of Finland through the Finnish Quantum Flagship project (358878, UH). SS also acknowledges the financial support of the Finnish Ministry of Education and Culture through the Quantum Doctoral Education Pilot Program (QDOC VN/3137/2024-OKM-4). EKV is also in part supported by the Research Council of Finland grant 1371600.
\end{acknowledgments}

\appendix

\begin{widetext}
\section{Proof of the leading-order additivity of negativity}\label{App:proof_of_additivity}
\begin{proposition}
Let
$$
\hat\rho_i = \hat\rho^{(0)}_i + \lambda^{n_i} \hat\rho^{(n_i)}_i + \mathcal{O}(\lambda^{n_i+1})
$$
be density operators on finite-dimensional bipartite Hilbert spaces such that the unperturbed states satisfy the PPT condition $\big(\hat\rho_i^{(0)}\big)^{T_B} \ge 0$. For $n=\min_i n_i$ and sufficiently small $\lambda$, the leading-order negativity is additive
$$
\mathcal{N}\left( \bigotimes_ {i=1}^N\hat\rho_{i} \right) = \lambda^n \sum_ {i=1}^N \mathcal{N}^{(n)}(\hat\rho_i)+\mathcal{O}(\lambda^{n+1}).
$$
\end{proposition}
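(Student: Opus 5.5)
The plan is to avoid tracking the eigenvalues of the full product directly. Instead I would use the multiplicativity of the trace norm under tensor products, which turns the statement into a claim about each factor separately.

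\textbf{Step 1 (reduction to single factors).} Write each $\hat\rho_i$ on $\mathcal{H}_{A_i}\otimes\mathcal{H}_{B_i}$. Take the global bipartition $A=\bigotimes_i A_i$, $B=\bigotimes_i B_i$. Partial transposition with respect to $B$ factorizes:
$$\Big(\bigotimes_i\hat\rho_i\Big)^{T_B}=\bigotimes_i\hat\rho_i^{T_{B_i}}.$$
For Hermitian operators the trace norm is multiplicative under tensor products, since singular values multiply. Each $\hat\rho_i^{T_{B_i}}$ has unit trace, so $\|\hat\rho_i^{T_{B_i}}\|_1=1+2\mathcal{N}(\hat\rho_i)$ by Eq.~\eqref{neg_definition}. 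This gives the exact identity
$$\mathcal{N}\Big(\bigotimes_i\hat\rho_i\Big)=\tfrac12\Big(\prod_{i=1}^N\big(1+2\mathcal{N}(\hat\rho_i)\big)-1\Big)=\sum_i\mathcal{N}(\hat\rho_i)+2\sum_{i<j}\mathcal{N}(\hat\rho_i)\mathcal{N}(\hat\rho_j)+\dots$$
Hence it suffices to prove, for a single factor, that
$$\mathcal{N}(\hat\rho_i)=\lambda^{n_i}\mathcal{N}^{(n_i)}(\hat\rho_i)+\mathcal{O}(\lambda^{n_i+1}).$$

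\textbf{Step 2 (collecting orders).} Granting the single-factor expansion, every product of two or more negativities is $\mathcal{O}(\lambda^{2n})\subset\mathcal{O}(\lambda^{n+1})$, because $n\ge1$. Any factor with $n_i>n$ contributes only $\mathcal{O}(\lambda^{n+1})$. What survives at order $\lambda^n$ is exactly $\lambda^n\sum_i\mathcal{N}^{(n)}(\hat\rho_i)$, with the convention that $\mathcal{N}^{(n)}(\hat\rho_i)=0$ whenever $n_i>n$. This is the claimed formula.

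\textbf{Step 3 (single-factor expansion, the main obstacle).} Set $\sigma=\hat\rho^{T_B}=\sigma_0+\lambda^{m}\sigma_m+\mathcal{O}(\lambda^{m+1})$, with $\sigma_0\ge0$ Hermitian on a finite-dimensional space. Let $\Pi$ be the projector onto $\ker\sigma_0$, and let $g>0$ be the smallest nonzero eigenvalue of $\sigma_0$.

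By Weyl's inequality, for small $\lambda$ the eigenvalues of $\sigma$ split into two groups:
\begin{itemize}
\item those within $\mathcal{O}(\lambda^m)$ of the positive eigenvalues of $\sigma_0$, which therefore stay $\ge g/2$;
\item exactly $\dim\ker\sigma_0$ eigenvalues of size $\mathcal{O}(\lambda^m)$.
\end{itemize}

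For the second group I would show that they equal $\lambda^m\mu_k+\mathcal{O}(\lambda^{m+1})$, where $\mu_k$ are the eigenvalues of the compression $\Pi\sigma_m\Pi$. The tool is the Schur complement, with respect to the decomposition $\operatorname{ran}\Pi\oplus\operatorname{ran}(1-\Pi)$, of $\sigma-z$ for $|z|\lesssim\lambda^m$. The range block $(1-\Pi)(\sigma-z)(1-\Pi)$ is invertible with inverse of norm $\le 2/g$. The off-diagonal blocks are $\mathcal{O}(\lambda^m)$, so the Schur complement equals $\lambda^m\Pi\sigma_m\Pi-z+\mathcal{O}(\lambda^{m+1})+\mathcal{O}(\lambda^{2m})$. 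A second application of Weyl's inequality to this effective kernel operator then gives the eigenvalue estimate.

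Consequently
$$\mathcal{N}(\hat\rho)=\lambda^m\sum_k\max(0,-\mu_k)+\mathcal{O}(\lambda^{m+1}),$$
which defines $\mathcal{N}^{(m)}$. This is where the PPT hypothesis $\sigma_0\ge0$ is essential. Without it, negative eigenvalues of order one would already be present at zeroth order, and the cross terms in Step 1 would no longer be higher order. I expect the only delicate point to be making the error uniform, i.e.\ checking that the $\mathcal{O}(\lambda^{m+1})$ remainder in $\hat\rho$ propagates to an $\mathcal{O}(\lambda^{m+1})$ remainder in the eigenvalues. In finite dimensions the Schur-complement bound above handles this.
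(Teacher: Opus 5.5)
Your proposal is correct and follows the paper's proof essentially step for step: the exact identity $\mathcal{N}(\bigotimes_i\hat\rho_i)=\tfrac12\bigl[\prod_i(1+2\mathcal{N}(\hat\rho_i))-1\bigr]$ from locality of the partial transpose and multiplicativity of the trace norm, then Weyl's inequality to show that only eigenvalues in $\ker\bigl(\hat\rho_i^{(0)T_B}\bigr)$ can turn negative, then leading shifts given by the eigenvalues of the compression $P_0(\hat\rho_i^{(n_i)})^{T_B}P_0$, and finally order counting in the product. The only difference is that you derive the compression step with a Schur complement $S(z)$, where the paper simply cites degenerate perturbation theory; since $S(z)$ depends on $z$, a single Weyl bound only places each small eigenvalue near some $\lambda^m\mu_k$, and getting the multiplicities right (which the negativity sum needs) takes one more ingredient, for instance $\partial_z S(z)\le -1$, so that each eigenvalue branch of $S(z)$ vanishes exactly once, or a continuity argument in the off-diagonal coupling.
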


\begin{proof}
The proof proceeds as follows: we first express the negativity of a product state exactly in terms of its subsystems. Then we show that only eigenvalues in the kernel of $\hat\rho_i^{(0)\,T_B}$ can become negative under the perturbation, scaling as $\mathcal{O}(\lambda^{n_i})$. Substituting these into the exact formula gives the final additive expression for the product state.

Using the negativity definition in Eq.~\eqref{neg_definition}, together with the locality of the partial transpose and the multiplicativity of the trace norm, one can easily obtain for a strict product state
\begin{align}
    \mathcal{N}\left(\bigotimes_ {i=1}^N\hat\rho_{i}\right)
    &= \frac12 \Big(\big\|(\otimes_ {i=1}^N\hat\rho_{i}\big)^{T_B}\big\|_1-1\Big) \nonumber = \frac12 \Big(\big\|\otimes_ {i=1}^N(\hat\rho_{i}^{T_B})\big\|_1-1\Big) \nonumber \\
    &= \frac12 \Big(\prod_ {i=1}^N\|\hat\rho_{i}^{T_B}\|_1-1\Big) \nonumber \\
    &= \frac{1}{2} \Big[ \prod_{i=1}^N \big( 2 \mathcal{N}(\hat \rho_i) + 1 \big) - 1 \Big]. \label{negativity_exact}
\end{align}
This identity is exact.

Next, assume a perturbative expansion
$$
\hat\rho_i = \hat\rho^{(0)}_i + \lambda^{n_i} \hat\rho^{(n_i)}_i + \mathcal{O}(\lambda^{n_i+1}),
$$
where linearity of the partial transpose gives
$$
\hat\rho_i^{T_B} = \hat\rho_i^{(0)\,T_B} + \lambda^{n_i} (\hat\rho_i^{(n_i)})^{T_B} + \mathcal O(\lambda^{n_i+1}).
$$
Let $\alpha_k(\cdot)$, $k=1,\dots,d$, denote the eigenvalues in non-decreasing order. Using the spectral stability corollary of Weyl’s inequality\footnote{\parbox{\textwidth}{
For Hermitian matrices $A$ and $B$, Weyl’s inequality implies that the ordered eigenvalues satisfy
$|\alpha_k(A+B)-\alpha_k(A)| \le \|B\|_{\mathrm{op}}$
for all $k$, showing that each eigenvalue is Lipschitz continuous with respect to the operator norm \cite{HornJohnson2013}.
}}, we have
$$
|\alpha_k(\hat\rho_i^{T_B}) - \alpha_k(\hat\rho_i^{(0)\,T_B})| \le \|\hat\rho_i^{T_B} - \hat\rho_i^{(0)\,T_B}\|_{\mathrm{op}} = |\lambda|^{n_i} \| (\hat\rho^{(n_i)\,}_i)^{T_B}\|_{\mathrm{op}} + \mathcal O(\lambda^{n_i+1}) = \mathcal O(\lambda^{n_i}),
$$
where $\|\cdot\|_\mathrm{op}$ is the operator (spectral) norm. Hence every eigenvalue can move only by $\mathcal O(\lambda^{n_i})$. In particular, any strictly positive eigenvalue of $\hat\rho_i^{(0)\,T_B}$ remains positive for sufficiently small $\lambda$, so only eigenvalues in the kernel $\ker(\hat\rho_i^{(0)\,T_B})$ can become negative under the perturbation. Let $P_0$ denote the orthogonal projector onto $\ker(\hat\rho_i^{(0)\,T_B})$. By standard degenerate perturbation theory \cite{Sakurai2017} the leading-order shifts of these degenerate eigenvalues are given by the eigenvalues of the projected operator $P_0\,(\hat\rho_i^{(n_i)})^{T_B}\, P_0$. Thus
$$
\alpha_k(\hat\rho_i^{T_B})=\lambda^{n_i}\mu_k+\mathcal O(\lambda^{n_i+1}),
$$
where $\mu_k$ are the eigenvalues of $P_0\,(\hat\rho_i^{(n_i)})^{T_B}\,P_0$. Consequently, any negative eigenvalues, and therefore the negativity $\mathcal{N}(\hat\rho_i)$, appear at order $\lambda^{n_i}$.

Let $n=\min_i n_i$ be the global leading order and $\mathcal{N}^{(n)}(\hat\rho_i)$ the corresponding $\lambda^n$-order coefficient of $\mathcal{N}(\hat\rho_i)$. Then
\begin{align*}
    \mathcal{N}(\hat\rho_i) = \lambda^{n} \mathcal{N}^{(n)}(\hat\rho_i) + \mathcal{O}(\lambda^{n+1}),
\end{align*}
where $\mathcal{N}^{(n)}(\hat\rho_i)=0$ whenever $n_i>n$. Expanding the exact product formula, Eq.~\eqref{negativity_exact}, to leading order in $\lambda$ gives
\begin{align}
    \mathcal{N}\left( \bigotimes_ {i=1}^N\hat\rho_{i} \right) = \lambda^n \sum_ {i=1}^N \mathcal{N}^{(n)}(\hat\rho_i)+\mathcal{O}(\lambda^{n+1}). \label{negativity_additivity}
\end{align}
Only subsystems with $n_i=n$ contribute at leading order. Hence, under the natural perturbative assumptions, the leading-order negativity is additive.
\end{proof}

\section{Closed forms of the symmetric and anti-symmetric terms $C^{\pm}_{ij}$ and $X^{\pm}_{ij}$ } \label{app:computation_of_CX_terms}

In this section, we split the reduced density matrix elements $C_{ij}$ (Eq.~\eqref{C_term}) and $X_{ij}$ (Eq.~\eqref{X_term}) to the symmetric, $C_{ij}^{+}, X_{ij}^{+}$, and anti-symmetric $C_{ij}^{-}, X_{ij}^{-}$ parts by decomposing the Wightman function as
\begin{align}
    W(x,x') = \langle \hat\phi(x)\hat\phi(x')\rangle = \underbrace{ \tfrac12\langle\{\hat\phi(x),\hat\phi(x')\}\rangle}_\text{symmetric (+)} 
    + \underbrace{\tfrac12\langle[\hat\phi(x),\hat\phi(x')]\rangle}_\text{antisymmetric (-)} . \label{commutator_anticommutator}
\end{align}
and compute these terms in closed form.

We start with the matrix element
\begin{align*}
    C_{ij} &:= \lambda^2\int d^4x\,d^4x'\,\Lambda_i(x)\Lambda_j(x') e^{-i(\Omega_i t-\Omega_j t')} W(x,x') \\
    &= \lambda^2\int dt\,dt'\,\chi_i(t)\chi_j(t') e^{-i(\Omega_i t-\Omega_j t')} \langle \phi[f_i](t)\,\phi[f_j](t') \rangle
\end{align*}
where we have defined spacetime smearing $\Lambda_i(x):=\chi_i(t) f_i(\mathbf{x})$ and spatially smeared two-point function
\begin{align*}
    \langle \phi[f_i](t)\,\phi[f_j](t') \rangle := \int d^3x\,d^3x'\,f_i(\mathbf{x})f_j(\mathbf{x}') \langle \phi(\mathbf{x},t)\,\phi(\mathbf{x}',t) \rangle .
\end{align*}
The decomposition of the Wightman function in Eq.~\eqref{commutator_anticommutator} gives $C_{ij}=C^{+}_{ij}+C^{-}_{ij}$, respectively. For a massless scalar field with $\omega_k=|\mathbf{k}|\equiv k$, identical smearing profiles $f_i(\mathbf{x})\equiv f(\mathbf{x} -\mathbf{x}_i)$ and using the Fourier transform convention $\tilde g(\mathbf{k})= \int d^d x \,g(\mathbf{x}) e^{-i\mathbf{k}\cdot\mathbf{x}}$, we can write
\begin{align}
C_{ij}^{+} &:= \frac{\lambda^2}{2}\int dt\,dt'\,\chi_i(t)\chi_j(t') e^{-i(\Omega_i t-\Omega_j t')} \langle \{ \phi[f_i](t)\,\phi[f_j](t') \}\rangle  \\
&= \frac{\lambda^2}{2 x_{ij}}\int_0^\infty  \frac{d k}{(2\pi)^2} \, |f(\mathbf{k})|^2 \sin (k x_{ij}) \int_{-\infty}^{\infty}\int_{-\infty}^{\infty} dt \,dt' \chi_i(t) \chi_j(t')e^{- i(\Omega_i t-\Omega_j t')}   (e^{-i k(t-t')}+e^{i k(t-t')})\\
&=\frac{\lambda^2}{2(2\pi)^2  x_{ij}}\int_0^\infty d \mathcal{K}  \,\Big[  \tilde{\chi}_i(k+\Omega_i) \tilde{\chi}_j^*(k+\Omega_j) + \tilde{\chi}^*_i(k-\Omega_i) \tilde{\chi}_j(k-\Omega_j) \Big]  ,
\end{align}
where we have defined $x_{ij}:=|\mathbf{x}_i-\mathbf{x}_j|$ and $ d \mathcal{K} :=  dk\,| \tilde{f}(k)|^2 \sin(k x_{ij})$. In the last line we further assumed that the smearing functions are spherically symmetric, i.e. $f_i(\mathbf{x})\equiv f(|\mathbf{x} -\mathbf{x}_i|)$. For pointlike detectors with identical energy gaps $\Omega_i\equiv\Omega$ and Gaussian switching functions as in Eq.~\eqref{gaussian_switching}, the expression above can be integrated analytically to obtain
\begin{align}
    C_{ij}^+ &= \frac{\lambda^2}{8\pi^{3/2}x_{ij}\sigma}e^{-\big(\tfrac{x_{ij}}{2\sigma}\big)^2}\text{Im}\left[ e^{i\Omega x_{ij}}\operatorname{erf}(i\tfrac{x_{ij}}{2\sigma}+\Omega\sigma) \right] ,
\end{align}
where $\operatorname{erf}(x)$ corresponds to the error function defined in footnote~\ref{footnote_erf_Daw_func}. 

Following the same procedure and assumptions, the antisymmetric term $C^{-}_{ij}$ is obtained as
\begin{align}
C_{ij}^{-} &:= \frac{\lambda^2}{2}\int dt\,dt'\,\chi_i(t)\chi_j(t') e^{-i(\Omega_i t-\Omega_j t')} \langle [ \phi[f_i](t)\,\phi[f_j](t') ]\rangle \nonumber\\
&=\frac{\lambda^2}{2(2\pi)^2  x_{ij}}\int_0^\infty d \mathcal{K}  \, \Big[  \tilde{\chi}_i(k+\Omega_i) \tilde{\chi}_j^*(k+\Omega_j) - \tilde{\chi}^*_i(k-\Omega_i) \tilde{\chi}_j(k-\Omega_j) \Big] .
\end{align}
For pointlike detectors with Gaussian switching functions as in Eq.~\eqref{gaussian_switching}, the integral can be evaluated analytically, yielding
\begin{align}
    C_{ij}^- &= -\frac{\lambda^2}{8\pi^{3/2}x_{ij}\sigma}\,e^{-\big(\tfrac{x_{ij}}{2\sigma}\big)^2}\sin(\Omega x_{ij}) .
\end{align}
The $P$ terms can be easily computed from the above ones by taking the limit $ P = \lim _{x_{ij} \to 0} C_{ij}$.

The $X$ term given by
\begin{align*}
    X_{ij} &=-\lambda^2 \int dt\,dt'\,\chi_i(t)\chi_j(t') e^{i(\Omega_i t+\Omega_j t')}  \Big(\Theta(t-t')\langle \phi[f_i](t)\phi[f_j](t') \rangle+ \Theta(t'-t)\langle \phi[f_j](t')\phi[f_i](t) \rangle\Big)
\end{align*}
can be decomposed similarly. Using Eq.~\eqref{commutator_anticommutator} together with the identity $\Theta (t-t') + \Theta (t'-t) = 1$, we obtain the symmetric term as
\begin{align}
    X_{ij}^{+} &=-\frac{\lambda^2}{2} \int dt\,dt'\,\chi_i(t)\chi_j(t') e^{i(\Omega_i t+\Omega_j t')}  \Big(\Theta(t-t')\langle \{\phi[f_i](t),\phi[f_j](t') \} \rangle+ \Theta(t'-t)\langle \{\phi[f_j](t'),\phi[f_i](t)  \}\rangle\Big)\nonumber\\
    &= -\frac{\lambda^2}{2} \int dt\,dt'\, \chi_i(t)\chi_j(t') e^{i(\Omega_i t+\Omega_j t')}  \langle \{\phi[f_i](t),\phi[f_j](t') \} \rangle \nonumber\\
    &=-\frac{\lambda^2}{2(2\pi)^2x_{ij}}\int_0^\infty d \mathcal{K}  \, \Big[ \tilde{\chi}_i(k-\Omega_i)\tilde{\chi}^*_j(k+\Omega_j) + \tilde{\chi}_j(k-\Omega_j)\tilde{\chi}^*_i(k+\Omega_i)  \Big] .
\end{align}
where in the second equality we have used the symmetry of the anticommutator. For identical pointlike detectors with Gaussian switching functions, the integral evaluates as
\begin{align}
     X_{ij}^{+} &=-\frac{\lambda^2}{4\pi^2 x_{ij}\sigma}\,e^{-\sigma^2\Omega^2}\,D\left(\frac{ x_{ij}}{2\sigma}\right) ,
\end{align}
where $D(x)$ corresponds to the Dawson function as defined in footnote~\ref{footnote_erf_Daw_func}.\\

Finally, the more intricate term to evaluate is $X^{-}_{ij}$, given by
\begin{align*}
    X_{ij}^{-} &=-\frac{\lambda^2}{2} \int dt\,dt'\, \chi_i(t)\chi_j(t') e^{i(\Omega_i t+\Omega_j t')}  \Big(\Theta(t-t')\langle [ \phi[f_i](t),\phi[f_j](t') ] \rangle+ \Theta(t'-t)\langle [\phi[f_j](t')\,\phi[f_i](t)  ] \rangle\Big)\\
    &=-\frac{\lambda^2}{2} \int dt\,dt'\,\chi_i(t)\chi_j(t') e^{i(\Omega_i t+\Omega_j t')}  \Big( \langle [\phi[f_j](t'),\phi[f_i](t)  ] \rangle  + 2\, \Theta(t-t')\langle [ \phi[f_i](t),\phi[f_j](t') ] \rangle \Big).
\end{align*}
This term splits in two different integrals. The first term is
\begin{align}
    I_1 &= -\frac{\lambda^2}{2} \int dt\,dt'\, \chi_i(t)\chi_j(t') e^{i(\Omega_i t+\Omega_j t')}  \langle[\phi[f_j](t'),\phi[f_i](t)  ] \rangle \nonumber\\
    &=-\frac{\lambda^2}{2(2\pi)^2x_{ij}}\int_0^\infty d \mathcal{K}  \,\Big[ \tilde{\chi}_i(k-\Omega_i)\tilde{\chi}^*_j(k+\Omega_j) - \tilde{\chi}_j(k-\Omega_j)\tilde{\chi}^*_i(k+\Omega_i)  \Big] ,
\end{align}
which, for identical switching functions $\chi_i(t)\equiv\chi(t)$ and equal energy gaps $\Omega_i\equiv \Omega$, reduces to  $I_1\equiv 0$, as in all the cases considered in this article. The second term contributing to $X_{ij}^{-}$ is given by
\begin{align*}
    I_2 &= - \lambda^2\int dt\,dt'\, \chi_i(t)\chi_j(t') e^{i(\Omega_i t+\Omega_j t')}   \, \Theta(t-t')\langle [ \phi[f_i](t),\phi[f_j](t') ] \rangle \\
    &= -\frac{\lambda^2}{2(2\pi)^2x_{ij}}\int_0^\infty d \mathcal{K}  \int_{-\infty}^{\infty}\int_{-\infty}^{\infty} dt\,dt'\, \chi_i(t)\chi_j(t') e^{i(\Omega_i t+\Omega_j t')}   \, \Theta(t-t') \Big( e^{- i k (t-t')} - e^{ i k (t-t')}\Big).
\end{align*}
To compute this integral, it is useful to use the following integral representation of the Heaviside step function,
\begin{align}
    \Theta(z)=\lim_{\varepsilon\to 0^+}\frac{i}{2\pi} \int_ {-\infty}^\infty d\tau\,\frac{e^{-iz\tau}}{\tau + i\varepsilon} \label{Heaviside_decomp} .
\end{align}
Using this form, the integral $I_2$ can be written as
\begin{align}
    I_2 &= \lim_{\varepsilon\to 0^+}\frac{\lambda^2}{2(2\pi)^3x_{ij}} \int_0^\infty d \mathcal{K} \int_{-\infty}^\infty\frac{d\tau}{\tau + i \varepsilon}\int_{-\infty}^\infty\int_{-\infty}^\infty   \, dt \,dt' \, \chi_i(t)\chi_j(t') e^{i(\Omega_i t+\Omega_j t')} \Big( e^{- i  (k+\tau) (t-t')} - e^{ i (k-\tau)(t-t')}\Big)\nonumber\\
    &= \lim_{\varepsilon\to 0^+} \frac{\lambda^2}{2(2\pi)^3x_{ij}}\int_0^\infty d \mathcal{K}\int_{-\infty}^{\infty} \frac{d\tau}{\tau + i \varepsilon} \Big[ \tilde{\chi}_i(k +\tau-\Omega_i)\tilde{\chi}^*_j(k+\tau +\Omega_j) - \tilde{\chi}_j(k-\tau-\Omega_j)\tilde{\chi}^*_i(k-\tau +\Omega_i)  \Big] .
\end{align}
This integral can be evaluated analytically for the identical pointlike detector configuration with identical switching functions used throughout the paper. In this case, $I_1\equiv0$, and therefore
\begin{align}
    X^{-}_{ij}\equiv I_2=\frac{i\lambda^2}{8\pi^{3/2}x_{ij}\sigma}\,e^{-\sigma^2\Omega^2-\big(\tfrac{x_{ij}}{2\sigma}\big)^2} .
\end{align}

\section{Construction of $\tilde{\rho}_1$ for the alternative linear chain}\label{app:linear_chain_rho1}

In this short section, we explain how to construct the one-excitation block $\tilde{\rho}_1$ in Eq.~\eqref{rho1_pt_dirac} for the alternative linear chain (Fig.~\ref{fig:appendix_linear_chain}) introduced in Sec.~\ref{sec:Alternative linear chain}, consisting of $N$ identical detectors. This block is required for the computation of the leading-order negativity.

\begin{figure}[H]
    \centering
    \begin{tikzpicture}
        % Define nodes at specific coordinates
        \node (1) at (0,0) [draw, circle, fill=white, minimum size=15pt, inner sep=2pt, label=above:{\scriptsize$1$}] {$A$};
        \node (2) at (1.5,0) [draw, circle, fill=gray!50, minimum size=15pt, inner sep=2pt, label=above:{\scriptsize$N_A+1$}] {$B$};
        \node (3) at (3,0) [draw, circle, fill=white, minimum size=15pt, inner sep=2pt, label=above:{\scriptsize$2$}] {$A$};
        \node (4) at (4.5,0) [draw, circle, fill=gray!50, minimum size=15pt, inner sep=2pt, label=above:{\scriptsize$N_A+2$}] {$B$};
        \node (5) at (6,0) {};
        % Draw connecting lines
        \draw[thick] (1) -- (2) -- (3) -- (4) -- (5);     
        % Add side labels
            \node[below] at ($(1)!0.5!(2)$) {\( L \)};
            \node[below] at ($(2)!0.5!(3)$) {\( L \)};
            \node[below] at ($(3)!0.5!(4)$) {\( L \)};
            \node[below] at ($(4)!0.5!(5)$) {\( L \)};
            \node[right] at ($(5)$) {\( \cdots \)};
    \end{tikzpicture}
    \caption{Linear chain setup with alternating subsystem detectors with minimal causal separation $L\equiv 2T$. Indices are shown above each detector.}
    \label{fig:appendix_linear_chain}
\end{figure}
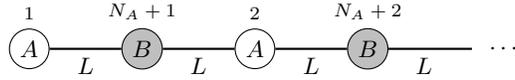

Since the chain starts with an $A$ detector, the detector numbers in the two subsystems are $N_A=\lceil\frac{N}{2}\rceil$ and $N_B=\lfloor\frac{N}{2}\rfloor$. Thus, subsystem $A$ contains one extra detector whenever $N$ is odd. Using the block-diagonal ordering introduced in Sec~\ref{sec:Perturbative evaluation of the reduced density matrix}, the block $\tilde{\rho}_1$ takes the form in Eq.~\eqref{rho1_pt}:
\begin{align*}
    \tilde{\rho}_1 =
    \begin{pmatrix}
    \mathcal{C}_{BB}^* & \mathcal{X}_{BA}^{\dag} \\
    \mathcal{X}_{BA} & \mathcal{C}_{AA}
    \end{pmatrix}.
\end{align*}

Since the detectors are equally spaced by a distance $L$, the blocks $\mathcal{C}_{BB}$, $\mathcal{C}_{AA}$ and $\mathcal{X}_{BA}$ depend only on the separation between detector labels. Taking $P\equiv C_{0L}$ and distance-based element labeling, we obtain the matrix representation as
\begin{align*}
    (\tilde\rho_1)_{ij}=
    \begin{cases}
    C_{2|i-j|L}^* & \text{if }i,j\leq N_B,\\
    C_{2|i-j|L} & \text{if }i,j> N_B,\\
    X_{(2k(i,j)-1)L} & \text{if }i>N_B\text{ and } j\leq N_B,\\
    (\tilde\rho_1)^*_{ji} &\text{if }i\leq N_B\text{ and } j> N_B,
    \end{cases}
\end{align*}
where the index $k(i,j)$ is
\begin{align*}
    k(i,j) = \max\{(i-N_A)-j+1,\,j-(i-N_A)\}.
\end{align*}
Blocks $\mathcal{C}_{AA}$ and $\mathcal{C}_{BB}$ exhibit a Toeplitz structure with entries depending on $|i-j|$. Similarly, $\mathcal{X}_{BA}$ is a Toeplitz matrix with a banded staircase structure, where each diagonal corresponds to a fixed separation $X_{(2k-1)L}$ up to $X_{(2N_B-1)L}$.

\end{widetext}

\bibliographystyle{apsrev4-1}
\bibliography{references.bib}

\end{document}